\def\<{\langle}
\def\>{\rangle}
\theoremstyle{plane}% plane, definition
\newtheorem{Thm}{Theorem}[section]
\newtheorem{Def}[Thm]{Definition}
\newtheorem{Prop}[Thm]{Proposition}
\newtheorem{Lem}[Thm]{Lemma}
\newtheorem{Cor}[Thm]{Corollary}
\newtheorem{Rem}[Thm]{Remark}
\newtheorem{Conj}[Thm]{Conjecture}
\newtheorem{Proc}[Thm]{Procedure}
\title{{\bf Formal Power Series on Algebraic Cryptanalysis}}
\author{Shuhei Nakamura \thanks{Department of Computer and Information Sciences, Ibaraki University, 4-12-1 Nakanarusawa-cho, Hitachi, Ibaraki 316-0033, Japan (E-mail: \texttt{shuhei.nakamura.fs71@vc.ibaraki.ac.jp})}}
\date{}
\begin{document}
\maketitle
%%%%%%%%%%%%%%%%%%%%%%%%%%%%

\begin{abstract}
In the complexity estimation for an attack that reduces a cryptosystem to solving a system of polynomial equations, the degree of regularity and an upper bound of the first fall degree are often used in cryptanalysis. While the degree of regularity can be easily computed using a univariate formal power series under the semi-regularity assumption, determining an upper bound of the first fall degree requires investigating the concrete syzygies of an input system.
In this paper, we investigate an upper bound of the first fall degree for a polynomial system over a sufficiently large field.
In this case, we prove that the first fall degree of a non-semi-regular system is bounded above by the degree of regularity, and that the first fall degree of a multi-graded polynomial system is bounded above by a certain value determined from a multivariate formal power series.
Moreover, we provide a theoretical assumption for computing the first fall degree of a polynomial system over a sufficiently large field.
\end{abstract}

%%%%%
%% main text
\section{Introduction}\label{sec:intr}
Solving the system of polynomial equations is an important topic in computer algebra since such a system can arise from many practical applications (cryptology, coding theory, computational game theory, optimization, etc.).
In particular, the so-called MQ problem that solves the system of quadratic equations is NP-hard even if over the finite field of order two \cite{Gar79}, and the security of multivariate cryptography \cite{DS} is based on the hardness of the problem.
Multivariate cryptography is expected to have a particularly high potential for building post-quantum signature schemes and is investigated in NIST post-quantum cryptography (PQC) standardization project \cite{NIST2016}.

In multivariate cryptography, there are several attacks reducing it to the problem solving the system of polynomial equations. 
The direct attack \cite{Bet09,FJ03} generates the system of quadratic equations from the public key and a ciphertext/message and obtains its plaintext/signature by solving the system.
Moreover, as a key recovery attack generating the system of polynomial equations, it is known such as the Rainbow-Band-Separation (RBS) attack \cite{DYCCC}, the MinRank attack \cite{Beu20,BG06,KS98}, Intersection attack \cite{Beu20}.
In particular, the MinRank attack reduces a cryptosystem to the MinRank problem and solves a MinRank instance by using a method generating the system of polynomial equations, such as the Kipnis-Shamir (KS) method \cite{KS98}, minors methods \cite{BBCGPSTV20,Fau08}.
The MinRank problem is also NP-hard \cite{SFB96} and is a base of the security of the rank metric code-based cryptography, which was proposed in NIST PQC standardization project \cite{NIST1R_PQC,NIST2R_ROLLO}.
The efficiency of these attacks depends on that of an algorithm, say {\it solver}, to solve the system of polynomial equations generated by an attack.

In cryptography, the solver with a Gr\"{o}bner basis or a (matrix) kernel search is mainly used as an efficient algorithm to solve the system generated by an attack.
The complexity of such a solver is given by using the {\it solving degree}, which is the maximal degree of polynomials required to solve the system.
However, the solving degree is an experimental value and it is practically hard to decide this value of a large-scale system arising from cryptography.
For the solver with a Gr\"{o}bner basis, the degree of regularity \cite{Bar04ICPSS} and the first fall degree \cite{DG10} are used as a tool approximating to the solving degree.
Also, for the solver with a kernel search, an analogue of the degree of regularity is used as such a tool under some assumptions.

For a semi-regular polynomial system, which generalizes a regular sequence in an overdetermined case, 
the degree of regularity introduced by M. Barded et al.\! \cite{Bar04ICPSS} is given by the minimum degree $D_{{\it reg}}$ of terms with a non-positive coefficient in the power series
\begin{equation}\label{eq:220325b}\frac{\prod _{i=1}^m(1-t^{d_i})}{(1-t)^n}, \end{equation}
where $d_1,\dots ,d_m$ and $n$ are the degrees and the number of the variables of the system, respectively.
%By the concept, a Gr\"{o}bner basis for a semi-regular system is computed within $D_{{\it reg}}$. 
Since the value $D_{{\it reg}}$ approximates the solving degree tightly and is simply computed by the power series \eqref{eq:220325b}, the degree of regularity is widely used for complexity estimation \cite{NIST2RLUOV,NIST2RGeMSS,NIST2RMQDSS,NIST2R}.
However, there exist non-semi-regular systems in cryptography, such as a quadratic system generated by the direct attack against HFE \cite{Pat96}, the RBS attack \cite{Nak20} and the MinRank attack using the KS method \cite{Nak20b}.

A non-semi-regular system appeared in cryptography is often solved within a smaller degree than the degree of regularity in the semi-regular case.
Namely, its solving degree is smaller than the degree of regularity. 
%The first fall degree introduced by J. Ding and T. J. Hodges \cite{DH11} based on \cite{DG10} is defined by using non-trivial syzygies of the top homogeneous component and captures the first degree at which a non-trivial degree fall occurs during the computing of a Gr\"{o}bner basis.
The first fall degree introduced by V. Dubois and N. Gama \cite{DG10} is defined by using non-trivial syzygies of the top homogeneous component and approximates the solving degree of a non-semi-regular system generated by some attacks. 
%However, it is hard to decide its actual value from given parameters in a cryptosystem and in previous estimations \cite{DH11,DK11,DY13,Ver19} an upper bound for the first fall degree was used by investigating a concrete non-trivial syzygy. % Add more citation!!!
%The discussion for this upper bound depends on each cryptosystem and is not given to a general system.
Since it is difficult to determine its actual value from large parameters in a cryptosystem, previous estimations \cite{DH11,DK11,DY13,Ver19} use an upper bound for the first fall degree.
These works provide this upper bound by investigating a concrete non-trivial syzygy and are important for providing an accurate cryptanalysis for a given cryptosystem.
However, the discussion for this upper bound depends on each cryptosystem and is not given to a general system.
Moreover, a relation with the degree of regularity is not clear theoretically.

In 2020, progress has been made in the complexity estimation for an attack generating a multi-graded polynomial system. For example, \cite{Nak20b} shows that the first fall degree of a multi-graded polynomial system generated by the MinRank attack using the KS method is bounded above by the minimal total degree $D_{{\it mgd}}$ of terms with a negative coefficient in 
\begin{equation}\label{eq:220325a}\frac{\prod _{i=1}^m(1-t_1^{d_1^{(i)}}t_2^{d_2^{(i)}}\cdots t_s^{d_s^{(i)}})}{(1-t_1)^{n_1}(1-t_2)^{n_2}\cdots (1-t_s)^{n_s}} \end{equation}
where $(d_1^{(i)},d_2^{(i)},\dots ,d_s^{(i)})$ is a multi-degree and $n_i$ is the size of each set of variables in the multi-graded system.
%Note that the multivariate power series \eqref{eq:220325a} coincides with the univariate power series \eqref{eq:220325b} at $t=t_1=t_2=\cdots =t_s$.
Moreover, in NIST PQC 2nd round, the estimations \cite{Nak20,SP20} in the bi-graded case influenced a parameter selection for Rainbow \cite{NIST3R}. 
However, the theoretical background for such estimations is not clear.

\subsection{Our contribution}
In this article, we mainly investigate an upper bound of the first fall degree for a polynomial system over a sufficiently large field. Upper bounds given in this article are actually that of the minimal degree $d_{{\it KSyz}}$ at the first homology of the Koszul complex to a polynomial system (Definition \ref{def:12}). The assumption for the order of a field gives that the minimal degree coincides with the first fall degree (Proposition \ref{prop:3}). \\
\indent Firstly, we prove that for a non-semi-regular system over a sufficiently large field, the degree of regularity is an upper bound of the first fall degree (Theorem \ref{thm:20230222c}). 
In general, we show that an upper bound of the first fall degree of a polynomial system is $D_{\mathbb{Z}_{\geq 0}}$ decided by the power series \eqref{eq:220325b} (Definition \ref{def:20230325a}, Theorem \ref{thm:20230221a}). If $D_{\it reg}=D_{\mathbb{Z}_{\geq 0}}$, then 
the first fall degree of a semi-regular system over a sufficient large field coincides with the degree of regularity (Corollary \ref{cor:20230325b}).\\
\indent Secondly, we naturally generalize the previous discussion to the multi-graded case. Namely, we define $D_{\mathbb{Z}_{\geq 0}^s}$ decided by the multivariate power series \eqref{eq:220325a} as a generalization of $D_{\mathbb{Z}_{\geq 0}}$ (Definition \ref{def:9}).
Then we prove that the first fall degree of a multi-graded polynomial system over a sufficiently large field is bounded above by $D_{\mathbb{Z}_{\geq 0}^s}$ (Theorem \ref{thm:8}).
In more detail, we also see that for a multi-graded polynomial system over a sufficiently large field, the multi-graded version of $d_{{\it KSyz}}$ is bounded above by ${\bf D}_{\mathbb{Z}_{\geq 0}^s,\prec}$ (Definition \ref{def:9}, \ref{def:123}, Theorem \ref{thm:8gen}).
These results theoretically guarantee recently developed cryptanalyses \cite{Nak20b,Nak20,SP20} using the first fall degree (\ref{ssec:app}). In particular, these cryptanalyses give examples such that $D_{\mathbb{Z}_{\geq 0}^s}$ is smaller than $D_{\mathbb{Z}_{\geq 0}}$.\\
\indent Finally, we provide an application using a generalization ${\bf D}_{\mathcal{X},\prec}^{\leq a}$ of ${\bf D}_{\mathbb{Z}_{\geq 0}^s,\prec}$, where $\mathcal{X}\subseteq \mathbb{Z}_{\geq 0}^s$ and $\prec $ is well-ordering on $\mathcal{X}$. We introduce a multi-graded XL algorithm with a (matrix) kernel search as a generalization of [4,46,48] and define a tool for estimating the complexity of this XL algorithm. Then we show that this tool is bounded from above by ${\bf D}_{\mathbb{Z}_{\geq a}^s,\prec}^{\leq a}$ under the assumption of a certain regularity (Proposition \ref{prop:0322c}).
In particular, this regularity provides computing the first fall degree of a polynomial system over a sufficiently large field (Proposition \ref{cor:20230417a}).

\subsection{Organization}
This article is organized as follows.
We explain some algorithmic backgrounds for our study in Section 2, and recall some fundamental concepts in commutative ring theory and proxies of the solving degree for estimating the complexity of a solver in Section 3.
In Section 4, we prove that the first fall degree is smaller than the degree of regularity in the semi-regular case if the order of the coefficient field is sufficiently large.
In Section 5, we prove that the first fall degree of a multi-graded polynomial system is bounded by a certain value determined from its multi-degree if the order of the coefficient field is sufficiently large, and provide the theoretical assumption for applying the XL algorithm with a kernel search.
In Section 6, we provide actual examples that satisfy the condition for the order of the coefficient field.

\subsection{Related work}
Bilinear polynomial systems can be seen as a special case of multi-graded polynomial systems. Among the studies on such systems, \cite{BCV20} investigated the relation between the first fall degree and the degree of regularity on the bi-degree $(1,d-1)$.
They explicitly determined the first fall degree $d_{{\bf y}\text{-}{\it ff}}$ and the degree of regularity $d_{{\bf y},{\it reg}}$ introduced by them under a certain regularity and clarified their relation.
In Subection \ref{ssec:20250914b}, we characterize their results by using formal power series, and we partially clarify the relation among $d_{{\bf y}\text{-}{\it ff}}$, $d_{{\bf y},{\it reg}}$ on the bi-degree $(1,d-1)$, as introduced by \cite{BCV20}, and $D_{\mathcal{X},\prec}^{\leq a}$ as introduced by us, where $\mathcal{X}=\{1\}\times \mathbb{Z}_{\geq 0}^s$ with the lexicographic order $\prec\;=\;\prec_{lex}$.
For example, for a bilinear system, we show that $d_{{\bf y}\text{-}{\it ff}}$ is bounded from above by $D_{\{1\}\times \mathbb{Z}_{\geq 0},\prec_{lex}}^{\leq -1}$ (Corollary \ref{cor:20250913b}). Moreover, under their regularity assumption, we see that $d_{{\bf y}\text{-}{\it ff}}=D_{\{1\}\times \mathbb{Z}_{\geq 0},\prec_{lex}}^{\leq -1}$ and $d_{{\bf y},{\it reg}}=D_{\{1\}\times \mathbb{Z}_{\geq 0},\prec_{lex}}^{\leq 0}$ (Corollary \ref{cor:20250914c}).
%%%%%

\section{Background}

In this section, we explain the complexity estimation for an algorithm to solve the system of polynomial equations.
%In cryptography, there exist several attacks which reduce to the problem solving the polynomial system and the efficiency of such an attack depends on the algorithm solving an induced polynomial system.
The Gr\"{o}bner basis algorithm and the XL algorithm with a (matrix) kernel search are mainly used in cryptography such as the NIST PQC standardization project (see \cite{NIST2RLUOV,NIST2RGeMSS,NIST2RMQDSS,NIST2R}).
We explain the complexity estimation for the Gr\"{o}bner basis algorithm in Subsection \ref{ssec:solvergb}, for the XL algorithm with a kernel search in Subsection \ref{ssec:solverks}, and for a solver to a multi-graded polynomial system in Subsection \ref{ssec:solvermgd}.

\subsection{Solver with a Gr\"{o}bner basis}\label{ssec:solvergb}

For a given polynomial system $f_1,\dots ,f_m$, when the codimension of the ideal $\<f_1,\dots ,f_m\>$ is zero, the ideal $\<f_1,\dots ,f_m\>$ includes a univariate polynomial as a generator of the Gr\"{o}bner basis with respect to the lexicographic monomial order.
By solving this univariate polynomial, a variable in the solution to the system $f_1,\dots ,f_m$ is fixed.
Repeating this procedure, we obtain a solution of the system $f_1,\dots ,f_m$.

The Gr\"{o}bner basis algorithm, which computes the Gr\"{o}bner basis for a given ideal-generator, was discovered by Buchberger \cite{Buch65}, and faster algorithms are implemented such as F4 \cite{Fau99}, F5 \cite{Fau02} and XL family \cite{Yan07a}. 
The F4 algorithm, for example, is the default algorithm for computing Gr\"{o}bner bases in the computer algebra software Magma \cite{Magma}, and the complexity of that with respect to the graded reverse lexicographic monomial order is estimated as
\begin{equation}\label{eq:cmplF4dslv}{n+d_{{\it slv}}\choose d_{{\it slv}}}^\omega \end{equation}
where $n$ is the number of variables, $\omega $ is a linear algebra constant, and $d_{{\it slv}}$ is the solving degree that is the maximal degree of polynomials required to solve the system.
For example, $\omega =2$ as the direct attack against G$e$MSS \cite{NIST2RGeMSS}.
Moreover, as a conversion algorithm to the Gr\"{o}bner basis with respect to another monomial order, the FGLM algorithm returns to the Gr\"{o}bner basis with respect to the lexicographic monomial order in complexity 
\[nL^3\]
where $L$ is the number of solutions counted with multiplicity in the algebraic closure of the coefficient field \cite{FGLM93}.

In order to satisfy the conditions that the codimension of an ideal is zero and that the parameter $L$ is sufficiently small, some variables are fixed before solving a polynomial system.

\subsection{Solver with a kernel search}\label{ssec:solverks}

For simplicity, we assume that $f_1,\dots , f_m$ are homogeneous of degree two.
For a target degree $d \in \mathbb{Z}_{\geq 0}$, the XL (eXtended Linearization) algorithm generates a new system of degree $d$ by multiplying the system $f_1,\dots ,f_m$ by the monomials of degree $d-2$ and obtains the so-called {\it Macaulay matrix of degree $d$} whose row components correspond to the coefficients of a polynomial in the new system.
The columns of the Macaulay matrix correspond to the monomials of degree $d$.
We also call the {\it solving degree} of an XL algorithm the maximal degree of polynomials used in an XL algorithm.
Note that, when arranging the columns of the Macaulay matrix according to a monomial order, the XL algorithm obtains a new pivot (corresponding to a new {\it leading term}) by computing a row echelon form of the matrix and can return Gr\"{o}bner bases of degree $d$.
The XL algorithm is available as an algorithm computing the Gr\"{o}bner basis as explained in Subsection \ref{ssec:solvergb}. 

For a solution $(a_1,\dots ,a_n)\in \mathbb{F}^n$ to the system $f_1,\dots ,f_m$, the vector \[(\dots,  a_1^{d_1}\cdots a_n^{d_n},\dots )_{d_1+\cdots +d_n=d} \in \mathbb{F}^{{n+d-1\choose d}}\] 
is a kernel vector of the Macaulay matrix of degree $d$. The kernel space of the Macaulay matrix contains vectors corresponding to the solutions to the system. The {\it XL algorithm with a kernel search} solves the system by finding such a vector. 
When the algorithm randomly returns a kernel vector, its complexity is estimated as 
\begin{equation}\label{eq:cmplxlks}q^{c_d-u_d}\cdot C_{{\it KernelSearch}}\end{equation}
where $C_{{\it KernelSearch}}$ is the complexity of the kernel search, $c_d$ is the number of the columns minus the rank and $u_d$ is the dimension of the kernel space that corresponds the solution space of the system.
Here the term $q^{c_d-u_d}$ is an iteration required to find a kernel vector corresponding to a solution to the system.
Moreover, when an algorithm that returns a basis of the kernel space is used as a kernel search, the complexity is sufficient in $C_{{\it KernelSearch}}+q^{c_d-u_d}$.

If $c_d\leq 1$ holds, then $q^{c_d-u_d}=1$ and a kernel vector obtained by a kernel search always corresponds to a solution. 
For the XL algorithm with a kernel search to solve a homogeneous system, the following value is often used as a tool to approximate the solving degree: 
\begin{equation}\label{eq:degree0322}d^{\leq 1}=\min \{d>0\mid c_d\leq 1\}.\end{equation}
In Subsection \ref{ssec:estXL}, we explain the theoretical assumption for controlling value $c_d$.

The XL algorithm in \cite{YCBC07} extracts a square matrix from the obtained Macaulay matrix and uses the Wiedemann algorithm \cite{Wie86} as a kernel search of the matrix. 
The complexity of the Wiedemann algorithm is estimated as 
\[N\cdot k_{{\it total}} =N^2\cdot k\]
where $N$ is the number of columns, $k_{{\it total}}$ is the number of non-zero entries of the matrix and $k$ is the average number of non-zero entries in a row.
Since the XL algorithm multiplies each quadratic polynomial in the system by a monomial, the number of non-zero entries in each row of the obtained Macaulay matrix is at most the number of non-zero coefficients in the quadratic polynomial, namely 
\[k\leq {n+2\choose 2}.\]
The algorithm is called the Wiedemann XL algorithm and is used for the direct attack against Rainbow in \cite{NIST2R}.

\subsection{Solver to a multi-graded polynomial system}\label{ssec:solvermgd}

The XL algorithm with respect to a target degree $d$ generates the Macaulay matrix of a degree $d$ whose columns correspond to the monomials of degree $d$.
For a multi-graded polynomial system, we can naturally consider the XL algorithm that generates the Macaulay matrix whose columns correspond to the monomials of degree ${\bf d}\in \mathbb{Z}_{\geq 0}^s$. 
Then the number of columns is 
\[N_{{\bf n},{\bf d}}:={n_1+d_1-1\choose d_1}\cdots {n_s+d_s-1\choose d_s}\]
where ${\bf d}=(d_1,\dots ,d_s)$, and the Macaulay matrix contains a kernel vector corresponding to a solution to the multi-graded system.
In the XL algorithm with a kernel search, the complexity of the Wiedemann algorithm is estimated as 
\[N_{{\bf n},{\bf d}}^2\cdot k\]
where $k$ is the average number of non-zero entries in a row.

%On the other hand, in order to obtain the Gr\"{o}bner basis, the XL algorithm multiplies a polynomial of degree ${\bf d}=(d_1,\dots ,d_s)$ by the monomials of degree $(a_1,\dots ,a_s)$ such that $a_{i_j}+d_j\leq D_j$ for each $j$ and computes the echelon form of the generated Macaulay matrix.

Moreover, in order to obtain the Gr\"{o}bner basis for a multi-graded polynomial system, the XL algorithms performed as follows (see Subsection \ref{ssec:fccrt} for notation).
\begin{Proc}\label{proc:20250913a}
Let $f_1,\dots ,f_m$ be a multi-graded polynomial system with multi-degrees $(d_1^{(1)},\dots ,d_s^{(1)}),\dots ,(d_1^{(m)},\dots ,$ $d_s^{(m)})$, respectively, and set ${\bf d}=(d_1,\dots ,d_s)\in \mathbb{Z}_{\geq 0}^s$.
\begin{enumerate}
\item Multiply each polynomial $f_i$ by the monomials of degree $(a_1,\dots ,a_s)$ such that $a_j+d_j^{(i)}\leq d_i$. Obtain a new multi-graded system.
\item Generate the Macaulay matrix corresponding to the resulting system and compute its echelon form.
\item Return the polynomials that correspond to the rows of the resulting matrix. 
\end{enumerate}
\end{Proc}
\noindent Note that this approach is a natural generalization from \cite{BCV20,SP20,Ver19}. 
When the Gaussian elimination is used as computing the echelon form, its complexity is estimated as 
\[R\cdot \tilde{N}_{{\bf n},{\bf d}}^{\omega -1}\]
where $\omega $ is a linear algebra constant, $\tilde{N}_{{\bf n},{\bf d}}={n_1+d_1\choose d_1}$ $\cdots {n_s+d_s\choose d_s}$ and $R=\sum _{i=1}^m\sum _{{\bf a}+\deg f_i\preceq {\bf d}}={n_1+a_1\choose a_1}\cdots {n_s+a_s\choose a_s}$. 
Here, for ${\bf a}=(a_1,\dots ,a_s),{\bf b}=(b_1,\dots ,$ $b_s)\in \mathbb{Z}_{\geq 0}^s$, the relation ${\bf a}\preceq {\bf b}$ is defined as $a_i\leq b_i$ for each $1\leq i\leq s$.

%%%%%
\section{Preliminaries}\label{sec:prel}
In this section, we recall some fundamental concepts in commutative ring theory and the complexity estimation for a Gr\"{o}bner basis algorithm.

\subsection{Fundamental concepts in commutative ring theory}\label{ssec:fccrt}
A (Noetherian) commutative ring $R$ is said to be {\it $\mathbb{Z}^s$-graded} if it has a decomposition $R=\bigoplus _{{\bf d}\in \mathbb{Z}^s}R_{{\bf d}}$ such that $R_{{\bf d}_1}R_{{\bf d}_2}\subseteq R_{{\bf d}_1+{\bf d}_2}$ for any ${\bf d}_i\in \mathbb{Z}^s$.
In this article, we assume always that $R_{{\bf d}}=\{0\}$ if ${\bf d}\in \mathbb{Z}^s$ has a negative component, and that $\dim R_{\bf d}<\infty $ for every ${\bf d}\in \mathbb{Z}^s$.
Then we call $R$ a {\it $\mathbb{Z}_{\geq 0}^s$-graded commutative ring}. 
An element $h$ in $R_{\bf d}$ is said to be {\it $\mathbb{Z}_{\geq 0}^s$-homogeneous}, or simply {\it homogeneous} and then we denote ${\bf d}$ by $\deg _{\mathbb{Z}_{\geq 0}^s}h$ and call it {\it the $\mathbb{Z}_{\geq 0}^s$-degree of $h$}.
If $R_{{\bf 0}}$ is a field, put \[{\rm HS}_R({\bf t})=\sum _{{\bf d}\in \mathbb{Z}_{\geq 0}^s}(\dim _{R_{{\bf 0}}}R_{{\bf d}})\cdot {\bf t}^{{\bf d}}\in \mathbb{Z}_{\geq 0}[[t_1,\dots ,t_s]],\] where ${\bf d}=(d_1,\dots ,d_s)$ and ${\bf t}^{{\bf d}}=t_1^{d_1}\cdots t_s^{d_s}$ which is called a {\it (multivariate) Hilbert series}. 
For a $\mathbb{Z}_{\geq 0}^s$-graded commutative ring $R$, its quotient ring with an ideal generated by $\mathbb{Z}_{\geq 0}^s$-homogeneous elements is $\mathbb{Z}_{\geq 0}^s$-graded. 
For example, the polynomial ring $\mathbb{F}[{\bf x}_1,\dots ,{\bf x}_s]$ is $\mathbb{Z}_{\geq 0}^s$-graded by $\deg _{\mathbb{Z}_{\geq 0}^s}x_{ij}=(0,\dots ,0,\overset{i}{1},0,\dots ,0),$
where ${\bf x}_i=\{x_{i1},\dots ,x_{in_i}\}$.
Then we have a decomposition $\mathbb{F}[{\bf x}_1,\dots ,{\bf x}_s]=\bigoplus _{{\bf d}\in \mathbb{Z}_{\geq 0}^s}\mathbb{F}[{\bf x}_1,\dots ,{\bf x}_s]_{{\bf d}}$ where $\mathbb{F}[{\bf x}_1,\dots ,{\bf x}_s]_{{\bf d}}$ is the vector space over $\mathbb{F}$ generated by the monomials of $\mathbb{Z}_{\geq 0}^s$-degree ${\bf d}$.
%A polynomial of $\mathbb{F}[{\bf x}_1,\dots ,{\bf x}_s]_{{\bf d}}$ for some ${\bf d}\in \mathbb{Z}_{\geq 0}^s$ is said to be {\it $\mathbb{Z}_{\geq 0}^s$-graded homogeneous} and The quotient ring of the polynomial ring with an ideal generated by $\mathbb{Z}_{\geq 0}^s$-graded homogeneous　polynomials is also $\mathbb{Z}_{\geq 0}^s$-graded.
When $s=1$, for a polynomial $f$ in $\mathbb{F}[x_1,\dots ,x_n]$ and a well-ordering $<$ on $\mathbb{Z}_{\geq 0}$, we consider an expression $f=\sum _d f_d$ where $f_d\in \mathbb{F}[x_1,\dots ,x_n]_d$ and call its homogeneous component of degree $\max _<\{d\mid f_d\not =0\}$ {\it its top homogeneous component}.
When $\deg _{\mathbb{Z}_{\geq 0}}x_i=1$, the polynomial ring is said to be {\it standard graded} and we denote $\deg _{\mathbb{Z}_{\geq 0}}f$ as $\deg f$.
In this article, we mainly treat a homogeneous case and call an element of $\mathbb{F}[x_1,\dots ,x_n]^m$ a {\it system}. % Add more remarkable comments 
Moreover, we call a system whose components are homogeneous a {\it homogeneous system}.

Let $R$ be a commutative ring.
For $(h_1,\dots ,h_m)\in R^m$, we define an $R$-module homomorphism 
\[R^m\rightarrow R, (b_1,\dots ,b_m)\mapsto \sum _{i=1}^mb_ih_i.\]
Then we denote by ${\rm Syz}_R(h_1,\dots ,h_m)$, or simply ${\rm Syz}(h_1,\dots, h_m)$, the kernel of this homomorphism and its element is called a {\it syzygy} of $(h_1,\dots ,h_m)$.
For example, 
\[\pi _{ij}:=(0,\dots ,0,\underset{i}{-h_j},0,\dots ,0,\underset{j}{h_i},0,\dots ,0),\]
where $1\leq i<j\leq m$, is such an element.
Here, we denote by ${\rm KSyz}(h_1,\dots ,h_m)$ the submodule generated by the elements $\pi _{ij}$ and its element is called a {\it Koszul syzygy}.
Let $R$ be a $\mathbb{Z}_{\geq 0}^s$-graded commutative ring.
For $\mathbb{Z}_{\geq 0}^s $-homogeneous elements $h_1,\dots ,h_m\in R$ with ${\bf d}_i=\deg _{\mathbb{Z}_{\geq 0}^s}h_i$ and the free module $E=R^m$ with the standard basis $\{{\bf e}_1,\dots ,{\bf e}_m\}$, we denote by $K_i$ the $i$-th exterior power $\bigwedge ^iE$ and give the {\it Koszul complex}
\[K(h_1,\dots ,h_m)_\bullet :\cdots \overset{\delta _4}{\rightarrow }K_3\overset{\delta_3}{\rightarrow} K_2\overset{\delta_2}{\rightarrow } K_1 \overset{\delta_1}{\rightarrow }R\rightarrow 0\]
by $\delta _l({\bf e}_{i_1}\wedge \cdots \wedge {\bf e}_{i_l})=\sum _{k=1}^l(-1)^{k+1}h_{i_k}{\bf e}_{i_1}\wedge \cdots \wedge {\bf e}_{i_{k-1}}\wedge {\bf e}_{i_{k+1}}\wedge \cdots \wedge {\bf e}_{i_l}.$
Then we denote by $H_i(K(h_1,\dots ,h_m)_\bullet )$ the $i$-th homology group of the complex and by $H_i(K(h_1,\dots ,h_m)_\bullet )_{{\bf d}}$ its component of degree ${\bf d}$.
Note that 
\[H_1(K(h_1,\dots ,h_m)_\bullet )_{{\bf d}}={\rm Syz}(h_1,\dots ,h_m)_{{\bf d}}/{\rm KSyz}(h_1,\dots ,h_m)_{{\bf d}}.\]
Define $K(h_1,\dots ,h_s)(-{\bf d})$ as $K(h_1,\dots ,h_s)(-{\bf d})_{{\bf d}_0}:=K(h_1,\dots ,h_s)_{{\bf d}_0-{\bf d}}$ for any ${\bf d}_0$.
Since $K(h_1,\dots ,h_m)_\bullet $ is the mapping cone of \[K(h_1,\dots ,h_{m-1})_\bullet (-{\bf d}_m)\overset{\times h_m}{\rightarrow }K(h_1,\dots ,h_{m-1})_\bullet ,\] we have the short exact sequence (see \cite{Die15,Sch03}):
\[0\rightarrow K(h_1,\dots ,h_{m-1})_\bullet \rightarrow K(h_1,\dots ,h_m)_\bullet \rightarrow K(h_1,\dots ,h_{m-1})_\bullet (-{\bf d}_m)[-1]\rightarrow 0.\]
Namely, we can obtain the following long exact sequence:
\begin{equation}\label{eq:les}
%\scalebox{1}{$
\hspace{-10pt}\begin{array}{l}
\cdots \rightarrow H_2(K(h_1,\dots ,h_{m-1})_\bullet )(-{\bf d}_m)\\
\overset{\times h_m}{\rightarrow }H_2(K(h_1,\dots ,h_{m-1})_\bullet ) \rightarrow H_2(K(h_1,\dots ,h_m)_\bullet )\\
\rightarrow H_1(K(h_1,\dots ,h_{m-1})_\bullet )(-{\bf d}_m)\\
\overset{\times h_m}{\rightarrow }H_1(K(h_1,\dots ,h_{m-1})_\bullet ) \rightarrow H_1(K(h_1,\dots ,h_m)_\bullet )\\
\rightarrow R/\langle h_1,\dots ,h_{m-1}\rangle (-{\bf d}_m)\overset{\times h_m}{\rightarrow}R/\langle h_1,\dots ,h_{m-1}\rangle \\
\rightarrow R/\langle h_1,\dots ,h_m\rangle \rightarrow 0
\end{array}
%$}
\end{equation}

\subsection{Complexity estimation for a Gr\"{o}bner basis algorithm}\label{ssec:2.2}
In this subsection, we treat only the standard graded case (see Subsection \ref{ssec:fccrt}).

In Subsection \ref{ssec:solvergb}, we explain that the complexity of the Gr\"{o}bner basis algorithm depends on the solving degree $d_{{\it slv}}$ (see Equation \eqref{eq:cmplF4dslv} for example).
%\paragraph{Gr\"{o}bner basis algorithm}
%A Gr\"{o}bner basis algorithm that computes a Gr\"{o}bner basis of the ideal generated by a given polynomial system was discovered by Buchberger \cite{Buch65}. 
%In multivariate cryptography, it is used as an algorithm for solving a system of polynomial equations and there exist several attacks using a Gr\"{o}bner basis algorithm.
%The complexity of computing a Gr\"{o}bner basis often dominates the overall complexity of a attack and its estimation is important to security analysis against such attacks. 
%For example, the complexity of the Gr\"{o}bner basis algorithm F4 \cite{Fau99} with a polynomial system in $n$ variables is estimated by 
%\[\binom{n+d_{slv}}{d_{slv}}^\omega ,  \]
%where $2<\omega \leq 3$ is a linear algebra constant and $d_{slv}$ is the {\it solving degree} that is the maximal degree required to compute the Gr\"{o}bner basis.
However, the solving degree is an experimental value.
In order to estimate the complexity of solving a large-scale polynomial system appearing in cryptography, we need to consider a theoretical tool to approximate the solving degree.  

The degree of regularity introduced by Bardet et al.\! \cite{Bar04ICPSS} is well-known as such a tool. 
\begin{Def}[\cite{Bar04ICPSS}]
For polynomials $f_1,\dots ,f_m$ in the (standard graded) polynomial ring $\mathbb{F}[x_1,\dots ,x_n]$ with $d_i=\deg f_i$, we define the {\rm degree of regularity} $d_{{\it reg}}$ as follows:
\[d_{{\it reg}}(f_1,\dots ,f_m)=\inf \{d\mid \langle f_1^{top},\dots ,f_m^{top}\rangle _d=\mathbb{F}[x_1,\dots ,x_n]_d\}\cup \{\infty \},\]
where $f_i^{top}$ is its top homogeneous component of $f_i$.
\end{Def}

When the top homogeneous component of the system $(f_1,\dots ,f_m)$ is {\it semi-regular} \cite{Bar04ICPSS} (also see Remark \ref{rem:4}), the degree of regularity coincides with the degree $D_{{\it reg}}$ defined as follows:

\begin{Def}
For the system $f_1,\dots, f_m$ with $d_i=\deg f_i$, we define $D_{{\it reg}}=D_{{\it reg}}(f_1,\dots ,f_m)$ as the minimum degree of terms with a non-positive coefficient in the power series 
\begin{equation}\label{eq:20220407a}\frac{\prod _{i=1}^m(1-t^{d_i})}{(1-t)^n}.\end{equation}
\end{Def}

In \cite{Fro85}, Fr\"{o}berg's conjectured that the Hilbert series for the quotient ring of a ``generic system" is determined by the power series \eqref{eq:20220407a} when the characteristic of the coefficient field is zero.
Indeed, this conjecture holds under the some conditions (see \cite{Ani86,Fro85,Sta80}).
Moreover, Diem in \cite{Die04} proved that, for any polynomial system that has the same number of variables and the degrees as the generic system, each coefficient of the Hilbert series for its quotient ring is larger than that for the generic system. 
%Hence we expect the following conjecture from Fr\"{o}berg's conjecture.
Hence Fr\"{o}berg's conjecture induces the following statement: 
\begin{Conj}
Let $\mathbb{F}$ be a field of characteristic zero. 
For any polynomials $g_1,\dots ,g_m$ in $\mathbb{F}[x_1,\dots ,x_n]$ such that $\deg g_i=d_i$, the following inequality holds:
\[D_{\it reg}\leq d_{\it reg}(g_1,\dots ,g_m).\]
\end{Conj}
For a semi-regular system, the degree of regularity $D_{{\it reg}}$ tightly approximates the solving degree.
However, for a non-semi-regular system such as a multi-graded polynomial system, it is known that its Gr\"{o}bner basis is computed within a smaller degree than the degree of regularity (see Example \ref{rem:20251224a} or \cite{Nak20,Nak20b}).
By the conjecture above, we do not expect to use the degree of regularity as a tool for the solving degree of a non-semi-regular system.

After the works \cite{FJ03,GJS06} that observe the Gr\"{o}bner basis algorithm on a non-semi-regular system  generated by the direct attack against HFE \cite{Pat96}, 
Dubois and Gama \cite{DG10} propose the {\it first fall degree} as a theoretical tool for approximating the solving degree of a non-semi-regular system, and later Ding and Hodges \cite{DH11} reformulate the definition into an essential form (see Definition \ref{def:ffd}). Note that the first fall degree does not always work, as it is possible to intentionally create a non-trivial syzygy \cite{DS13}.

In order to capture a non-trivial degree fall occurs on the system, the {\it first fall degree} is defined using the non-trivial syzygies of the top homogeneous component for the polynomial system as follows: Let $B=\mathbb{F}_q[x_1,\dots ,x_n]/\langle x_1^q,\dots ,x_n^q\rangle $ with the standard graded decomposition $B=\oplus _{d\geq 0}B_d$.
For any $h\in \mathbb{F}[x_1,\dots ,x_n]$, we denote by $\overline{h}$ the image of $h$ under the natural surjection $\mathbb{F}_q[x_1,\dots ,x_n]\rightarrow \mathbb{F}_q[x_1,\dots ,x_n]/\langle x_1^q,\dots ,x_n^q\rangle $.
Let $d_0$ be a positive integer and $h_1,\dots ,h_m\in \mathbb{F}_q[x_1,\dots ,x_n]_{d_0}$.
For a positive integer $d$, we consider 
\[\varphi _d:B_{d-d_0}^m\rightarrow B_d,(\overline{b}_1,\dots ,\overline{b}_m)\mapsto \overline{b}_1\overline{h}_1+\cdots +\overline{b}_m\overline{h}_m.\]
Then, for 
\[\overline{\pi} _{ij}:=(0,\dots ,0,-\overline{h}_j,0,\dots ,0,\overline{h}_i,0,\dots ,0),\quad \]\[\overline{\tau} _i:=(0,\dots ,0,\overline{h}_i^{q-1},0,\dots ,0),\]
we define the following subspace of ${\rm Syz}_B(\overline{h}_1,\dots ,\overline{h}_m)_d={\rm Ker}\,\varphi _d$:
\[{\rm TSyz}_B(\overline{h}_1,\dots ,\overline{h}_m)_d:=\langle \overline{b}_{ij}\overline{\pi} _{ij},\overline{b}_i\overline{\tau} _i\mid \overline{b}_{ij}\in B_{d-2d_0},\overline{b}_i\in B_{d-d_0q}\rangle _{\mathbb{F}_q}. \]

\begin{Def}\label{def:ffd}
For any $f_1,\dots ,f_m\in \mathbb{F}[x_1,\dots ,x_n]$ such that $\deg f_i=d_0$, the first fall degree $d_{{\it ff}}(f_1,\dots ,f_m)$ is defined as
\[d_{{\it ff}}(f_1,\dots ,f_m)=\inf \{d\mid {\rm Syz}_B(\overline{f_1^{top}},\dots ,\overline{f^{top}_m})_d\not ={\rm TSyz}_B(\overline{f_1^{top}},\dots ,\overline{f^{top}_m})_d\}\cup \{\infty \}.\]
\end{Def}

There are several works on the first fall degree.
For example, as an analysis for the direct attack against HFE \cite{Pat96}, Ding and Hodges \cite{DH11} give an upper bound for the first fall degree by constructing a non-trivial syzygy of the generated quadratic system and prove that it can be solved in a quasi logarithmic time.
Moreover, as an analysis for the KS method \cite{KS98} for the MinRank problem \cite{Fau08}, Verbel et al.\! \cite{Ver19} investigate a non-trivial syzygy (of the top homogeneous component) of the generated quadratic system and give a new complexity estimation with the first fall degree for this method.
However, the previous evaluations for the first fall degree depend on each cryptosystem, and it is hard to apply to others.
%However, for the first fall degree, there is no a formula such as the power series deducing the degree of regularity.

For a given polynomial system, the degree of regularity and the first fall degree is defined on its top homogeneous component. 
It suffices to show a homogeneous case for discussing them.
Thus, in this article, we always assume that a polynomial is homogeneous with respect to the standard grading.

%%%%%20251223
\begin{Rem}\label{rem:20251224a}
{\rm For multi-graded polynomial systems, it is known that there exist cases where the solving degree is smaller than that of a semi-regular system of the same size. In particular, such a multi-graded polynomial system is not semi-regular. 
Table \ref{table:20251223a} compares semi-regular and bi-graded systems of $9$ polynomials
in $\mathbb{F}[\bf x,\bf y]$ with $|\bf x|=5$ and $|\bf y|=3$, and shows that such cases indeed occur.
As defined in Definition \ref{def:9}, the second line $D_{\mathbb{Z}_{\geq 0}^2}$ in Table~\ref{table:20251223a} is the minimum degree among all terms $t_1^{d_1}t_2^{d_2}$ with a negative coefficient in the two-variable series
$(1-t_1^2)^{m_{(2,0)}}(1-t_1t_2)^{m_{(1,1)}}(1-t_2^2)^{m_{(0,2)}}(1-t_1)^{-5}(1-t_2)^{-3}$.
The third line ${\bf D}_{\mathbb{Z}_{\geq 0}^2,\prec}$ denotes the exponent $(d_1,d_2)$ of the smallest such term with respect to the graded reverse lexicographic order $\prec$.}
\end{Rem}
\begin{table}[h]
\caption{Comparison of semi-regular and randomly generated bi-graded systems of $9$ polynomials in $\mathbb{F}[{\bf x},{\bf y}]$ with $|{\bf x}|=5$ and $|{\bf y}|=3$. 
Here, $m_{(i,j)}$ denotes the number of polynomials of bi-degree $(i,j)$ and $D_{\mathbb{Z}_{\geq 0}^2}$ and ${\bf D}_{\mathbb{Z}_{\geq 0}^2,\prec}$ are defined in Definition \ref{def:9}.}
\label{table:20251223a}
\begin{center}
\begin{tabular}{|c|c|c|c|c||c|} \hline
%&\multicolumn{4}{c||}{$(\sharp \text{Equ}_{(2,0)},\sharp \text{Equ}_{(1,1)},\sharp \text{Equ}_{(0,2)})$}&\multirow{2}{*}{Semi-reg.}\\
&\multicolumn{4}{c||}{$(m_{(2,0)},m_{(1,1)},m_{(0,2)})$}&\multirow{2}{*}{Semi-reg.}\\
&$(1,4,4)$&$(2,4,3)$&$(3,4,2)$&$(4,4,1)$&\\ \hline
$d_{{\it slv}}$&3&4&4&5&$d_{{\it reg}}=5$\\ 
$D_{\mathbb{Z}_{\geq 0}^2}$&3&4&4&5&-\\
${\bf D}_{\mathbb{Z}_{\geq 0}^2,\prec}$&(0,3)&(1,3)&(2,2)&(2, 3)&-\\ \hline
%Comp.(bit)&$9.33$&15.86&18.24&20.31&27.06\\ \hline
\end{tabular}
\end{center}
\end{table}
In this article, we provide tools for estimating the solving degree of multi-graded polynomial systems.
In Section \ref{sec:multi}, we will prove that the first fall degree is bounded from above by $D_{\mathbb{Z}_{\geq 0}^2}$.

%%%%%
\section{Estimation using the standard grading}\label{sec:stand}
In Subsection \ref{ssec:proxyks}, we introduce a certain tool for approximating the solving degree which coincides with the first fall degree of a polynomial system over a sufficiently large field. In Subsection \ref{ssec:diem2015}, we prove that the first fall degree of a non-semi-regular system over such a field is smaller than the degree of regularity in the semi-regular case.

\subsection{Theoretical tool $d_{{\it KSyz}}$}\label{ssec:proxyks}
In this section, for a polynomial in $\mathbb{F}[x_1,\dots ,x_n]$, we only use the standard grading, i.e. $\deg _{\mathbb{Z}_{\geq 0}}x_i=1$.

We introduce the following definition for computing the first fall degree.

\begin{Def}\label{def:12}
For homogeneous polynomials $h_1,\dots ,h_m$ in the polynomial ring $\mathbb{F}[x_1,\dots ,x_n]$, we define $d_{{\it KSyz}}$ as 
\[\hspace{-15pt}d_{{\it KSyz}}(h_1,\dots ,h_m)=\inf \{d\mid {\rm Syz}(h_1,\dots ,h_m)_{d}\not ={\rm KSyz}(h_1,\dots ,h_m)_{d}\}\cup \{\infty \}.\]
\end{Def}

Let $\pi :\mathbb{F}_q[x_1,\dots ,x_n]\rightarrow \mathbb{F}_q[x_1,\dots ,x_n]/\langle x_1^q,\dots ,x_n^q\rangle $.
By the following lemmas, we see that $d_{{\it KSyz}}$ coincides with the first fall degree $d_{{\it ff}}$ for a sufficiently large $q$.

\begin{Lem}\label{lem:1}
For homogeneous polynomials $h_1,\dots ,h_m\in \mathbb{F}_q[x_1,\dots ,x_n]$ such that $\deg h_i=d_0$, if $q>d_{{\it ff}}(h_1,\dots ,h_m)$, then we have $d_{{\it ff}}(h_1,\dots ,h_m)\geq d_{{\it KSyz}}(h_1,\dots ,h_m)$.
\end{Lem}
\begin{proof}
 Let $B=\mathbb{F}_q[x_1,\dots ,x_n]/\langle x_1^q,\dots ,x_n^q\rangle $. 
Put $d=d_{{\it ff}}(h_1,\dots ,h_m)$.
There exists $\overline{\rho} =(\overline{\rho} _1,\dots ,\overline{\rho }_m)\in B_{d-d_0}^m$ such that 
\[\overline{\rho} \in {\rm Syz}_B(\overline{h}_1,\dots ,\overline{h}_m)_d\setminus {\rm TSyz}_B(\overline{h}_1,\dots ,\overline{h}_m)_d,\]
where $\deg \rho _i=d-d_0$ holds as a representative.
Then, since $\sum _{i=1}^m\overline{\rho}_i\overline{h}_i=0\in B_d$, we have $\rho _1h_1+\cdots +\rho _mh_m\in \langle x_1^q,\dots ,x_n^q\rangle $.
Since $\deg \rho _i+\deg h_i=d<q$, it follows that $\rho _1h_1+\cdots +\rho _mh_m=0$.
If $\rho \in {\rm KSyz}_R(h_1,\dots ,h_m)$, we obtain a contradiction $\overline{\rho }\in {\rm TSyz}_B(h_1,\dots ,h_m)$. 
Thus $\rho \in {\rm Syz}_R(h_1,\dots ,h_m)\setminus {\rm KSyz}_R(h_1,\dots ,h_m)$. Therefore, we have $d\geq d_{{\it KSyz}}(h_1,\dots ,h_m)$.
\end{proof}

\begin{Lem}\label{lem:2}
For homogeneous polynomials $h_1,\dots ,h_m\in \mathbb{F}_q[x_1,\dots ,x_n]$ such that $\deg h_i=d_0\geq 2$, if $q>d_{{\it KSyz}}(h_1,\dots ,h_m)$, then we have $d_{{\it ff}}(h_1,\dots ,h_m)\leq d_{{\it KSyz}}(h_1,\dots ,h_m)$.
\end{Lem}
\begin{proof}
Let $B=\mathbb{F}_q[x_1,\dots ,x_n]/\langle x_1^q,\dots ,x_n^q\rangle $. Put $d=d_{{\it KSyz}}(h_1,\dots ,h_m)$.
There exists $\rho \in {\rm Syz}_R(h_1,\dots ,h_m)_d\setminus {\rm KSyz}_R(h_1,\dots ,h_m)_d$.
In particular, $\overline{\rho }\in {\rm Syz}_B(\overline{h_1},\dots $ $,\overline{h_m})_d$.
Although an element $\overline{b}_i\overline{\tau} _i$ for $\overline{b}_i\in B_{d-d_0q}$ is contained in ${\rm TSyz}_B(\overline{h}_1,\dots ,\overline{h}_m)_d$ as generators, they do not appear since $B_{d-d_0q}=0$ by $d_0\geq 2$ and $q>d$.
Thus, if there exists $\overline{\rho} \in {\rm TSyz}_B(\overline{h}_1,\dots ,\overline{h}_m)_d$, then $\overline{\rho }=\sum _{i,j}\overline{b}_{ij}\overline{\pi }_{ij}$ for some $\overline{b}_{ij}$ where $\deg b_{ij}=d-d_0$ as a representative. 
Namely, $\rho -\sum _{i,j}b_{ij}\pi _{ij}\in \langle x_1^q,\dots ,x_n^q\rangle ^m$.
By $d<q$, we have $\rho -\sum _{i,j}b_{ij}\pi _{ij}=(0,\dots ,0)$.
Thus, we obtain a contradiction $\rho =\sum _{i,j}b_{ij}\pi _{ij}\in {\rm KSyz}(h_1,\dots ,h_m)_d$.
Therefore, $d_{{\it ff}}(h_1,\dots ,h_m)\leq d$. 
\end{proof}

\begin{Prop}\label{prop:3}
For homogeneous polynomials $h_1,\dots ,h_m$ $\in \mathbb{F}_q[x_1,\dots ,x_n]$ such that $\deg h_i=d_0\geq 2$, if $q>\min \{d_{{\it ff}}(h_1,\dots ,h_m),d_{{\it KSyz}}(h_1,\dots ,h_m)\}$, then we have $d_{{\it ff}}(h_1,\dots ,h_m)=d_{{\it KSyz}}(h_1,\dots ,h_m)$.
\end{Prop}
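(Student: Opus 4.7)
The plan is to derive Proposition 3.4 as a direct consequence of Lemmas 3.2 and 3.3 by a case split on which of the two values attains the minimum, bootstrapping from one inequality to the other.

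First I would observe that the condition $q > \min\{d_{{\it ff}}(h_1,\dots,h_m), d'_{{\it ff}}(h_1,\dots,h_m)\}$ splits naturally into two cases, but in each case the lemma that directly applies gives us the other value as an upper bound for $d_{{\it ff}}$ or a lower bound, which in turn pushes $q$ above it as well. Concretely, suppose first that the minimum equals $d_{{\it ff}}(h_1,\dots,h_m)$. Then $q > d_{{\it ff}}(h_1,\dots,h_m)$, so Lemma \ref{lem:1} yields $d_{{\it ff}}(h_1,\dots,h_m) \geq d'_{{\it ff}}(h_1,\dots,h_m)$. Combining these two inequalities gives $q > d'_{{\it ff}}(h_1,\dots,h_m)$, so the hypothesis of Lemma \ref{lem:2} is met and we conclude $d_{{\it ff}}(h_1,\dots,h_m) \leq d'_{{\it ff}}(h_1,\dots,h_m)$, hence equality.

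Next I would handle the symmetric case: if the minimum is $d'_{{\it ff}}(h_1,\dots,h_m)$, then $q > d'_{{\it ff}}(h_1,\dots,h_m)$ so Lemma \ref{lem:2} gives $d_{{\it ff}}(h_1,\dots,h_m) \leq d'_{{\it ff}}(h_1,\dots,h_m) < q$, which then triggers Lemma \ref{lem:1} to give the reverse inequality. In both branches the two lemmas close the loop and produce $d_{{\it ff}} = d'_{{\it ff}}$.

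There is essentially no technical obstacle; the only subtlety to watch is that Lemma \ref{lem:2} requires the additional hypothesis $d_0 \geq 2$, which is already assumed in the statement of the proposition, so it is legitimate to invoke in either branch. The proof is therefore a short two-line case analysis, and the only thing to be careful about in writing it up is to state clearly that the bootstrap step (deducing $q$ exceeds the other quantity after the first inequality) is what allows the second lemma to apply.
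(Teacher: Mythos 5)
Your proposal is correct and follows essentially the same two-branch bootstrap as the paper's own proof: apply Lemma \ref{lem:1} or Lemma \ref{lem:2} according to which value the hypothesis bounds, deduce that $q$ also exceeds the other value, and then invoke the remaining lemma to close the loop. No gaps.
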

\begin{proof}
When $q>d_{{\it ff}}$, we have $d_{{\it ff}}\geq d_{{\it KSyz}}$ by Lemma $\ref{lem:1}$. 
Then $q>d_{{\it ff}}\geq d_{{\it KSyz}}$, and we have $d_{{\it ff}}\leq d_{{\it KSyz}}$ by Lemma $\ref{lem:2}$.
Thus $d_{{\it ff}}=d_{{\it KSyz}}$.
Similarly, when $q>d_{{\it KSyz}}$, we have the same result. 
\end{proof}

In \ref{ssec:app}, we show that the assumption in this proposition, i.e. $q>\min \{d_{{\it ff}}(h_1,\dots $ $,h_m),d_{{\it KSyz}}(h_1,\dots ,h_m)\}$, is satisfied in attacks against actual cryptosystems.
In the next subsection, under the assumption, we see that the first fall degree of a non-semi-regular system is smaller than the degree of regularity in the semi-regular case.

\subsection{Diem's work}\label{ssec:diem2015}
In \cite{Die15}, C. Diem investigates a relation between the regularity of a polynomial system and its syzygies. 

The following definition is introduced in \cite{Die15}.

\begin{Def}[\cite{Die15}]\label{def:regupto}
A homogeneous system $h_1,\dots ,h_m \in \mathbb{F}[x_1,\dots ,x_n]$ is {\rm regular up to degree $d$} if the following multiplication map $\times h_i$ by $h_i$ is injective for each $i=1,\dots ,m$.
\[\hspace{-10pt}\times h_i:(S/\langle h_1,\dots ,h_{i-1}\rangle )_{d_0-\deg h_i}\rightarrow (S/\langle h_1,\dots ,h_{i-1}\rangle )_{d_0},(\forall d_0\leq d)\]
\end{Def}

\begin{Rem}\label{rem:4}
{\rm 
In their article \cite{Bar04ICPSS}, Bardet et al.\! define a {\it semi-regular} system for a homogeneous system.
Note that a homogeneous system is semi-regular if and only if is regular up to degree $d_{{\it reg}}-1$ (see \cite{Die15}).
They also call a polynomial system semi-regular when its top homogeneous component is semi-regular.}
\end{Rem}

\begin{Def}
Let $\preceq $ be a well-ordering on $\mathbb{Z}_{\geq 0}^s$. 
\begin{enumerate}
\item For two elements $a,b$ of the formal power series ring $\mathbb{Z}[[t_1,\dots ,t_s]]$, we denote $a\equiv _{\preceq {\bf d}}b$ if the coefficients of these monomials of degree less than or equal to ${\bf d}$ with respect to $\preceq $ are the same.
\item For a $\mathbb{Z}_{\geq 0}^s$-graded module $M$ over a $\mathbb{Z}_{\geq 0}^s$-graded commutative ring $R$, i.e. it has $M=\bigoplus _{{\bf d}\in \mathbb{Z}_{\geq 0}^s}M_{{\bf d}}$ such that $R_{{\bf d}_1}M_{{\bf d}_2}\subseteq M_{{\bf d}_1+{\bf d}_2}$ for any ${\bf d}_i$, we put $M_{\preceq {\bf d}}=\bigoplus _{{\bf d}_0\preceq {\bf d}}M_{{\bf d}_0}$.
\end{enumerate}
\end{Def}

The regularity in Definition \ref{def:regupto} is characterized as follows:

\begin{Prop}[\cite{Die15}]\label{prop:Die15}
Let $S$ be the standard graded polynomial ring $\mathbb{F}[x_1,\dots ,x_n]$ with $\mathbb{F}[x_1,\dots ,x_n]=\bigoplus _{d\in \mathbb{Z}_{\geq 0}}\mathbb{F}[x_1,\dots ,x_n]_{d}$. 
For a homogeneous system $h_1,\dots ,h_m\in S$ with $\deg h_i \not=0$, the following conditions are equivalent:
\begin{enumerate}
\item $h_1,\dots ,h_m$ is regular up to degree $d$
\item ${\rm HS}_{S/\langle h_1,\dots ,h_m\rangle }(t)\equiv_{\leq d}\prod _{i=1}^m(1-t^{\deg h_i}) {\rm HS}_S(t)$
\item $H_1(K(h_1,\dots ,h_m)_\bullet )_{\leq d}=\{0\}$
\end{enumerate}
\end{Prop}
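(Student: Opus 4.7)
The plan is to establish the equivalence by induction on $m$, with the long exact sequence \eqref{eq:les} of the Koszul complex as the workhorse at every inductive step. The base case $m=1$ is immediate, since $K(h_1)_\bullet$ reduces to $0\to S(-\deg h_1)\overset{\times h_1}{\rightarrow}S\to 0$: both (1) and (3) literally state that $\ker(\times h_1)$ vanishes in degrees $\leq d$, and (2) then follows by computing Hilbert series in the short exact sequence $0\to S(-\deg h_1)\to S\to S/\langle h_1\rangle\to 0$ that this vanishing produces up to degree $d$.

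For the inductive step assume the proposition for $h_1,\dots,h_{m-1}$, and first handle (1)$\Leftrightarrow$(3). By Definition \ref{def:regupto}, regularity up to degree $d$ of $h_1,\dots,h_m$ decomposes into regularity up to degree $d$ of the subsystem $h_1,\dots,h_{m-1}$ together with injectivity of $\times h_m$ on $(S/\langle h_1,\dots,h_{m-1}\rangle)_{d_0-\deg h_m}\to (S/\langle h_1,\dots,h_{m-1}\rangle)_{d_0}$ for every $d_0\leq d$. By the outer inductive hypothesis the first condition is equivalent to $H_1(K(h_1,\dots,h_{m-1})_\bullet)_{\leq d}=0$, and under this vanishing the pertinent segment of \eqref{eq:les} collapses to
\[0\to H_1(K(h_1,\dots,h_m)_\bullet)_{d_0}\to (S/\langle h_1,\dots,h_{m-1}\rangle)_{d_0-\deg h_m}\overset{\times h_m}{\rightarrow}(S/\langle h_1,\dots,h_{m-1}\rangle)_{d_0},\]
making the injectivity condition equivalent to $H_1(K(h_1,\dots,h_m)_\bullet)_{\leq d}=0$. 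Iterating this style of argument upward in homological degree (using the outer induction to supply $H_i(K_{m-1})_{\leq d}=0$ for all $i\geq 1$) strengthens (1) to the vanishing $H_i(K(h_1,\dots,h_m)_\bullet)_{\leq d}=0$ for every $i\geq 1$, and then (2) is obtained by truncating the Euler-characteristic identity
\[\prod_{i=1}^m(1-t^{\deg h_i})\cdot {\rm HS}_S(t)=\sum_{i\geq 0}(-1)^i{\rm HS}_{K_i}(t)=\sum_{i\geq 0}(-1)^i{\rm HS}_{H_i(K(h_1,\dots,h_m)_\bullet)}(t)\]
to degrees $\leq d$.

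The main obstacle is (2)$\Rightarrow$(3): the Euler identity alone only yields the cancellation $\sum_{i\geq 1}(-1)^{i+1}\dim H_i(K(h_1,\dots,h_m)_\bullet)_{d_0}=0$ for $d_0\leq d$, not individual vanishings. My plan is a nested induction on $d_0$, anchored at $d_0<2\min_i\deg h_i$: in this range $(K_i)_{d_0}=0$ for $i\geq 2$, so only $H_1$ appears in the alternating sum and (2) directly forces $H_1(K(h_1,\dots,h_m)_\bullet)_{d_0}=0$. For the inductive step the degree-induction hypothesis converts (2) at degrees $<d_0$ into (1) at degrees $<d_0$, which by the strengthening above gives $H_i(K_m)_{<d_0}=0$ and, via Definition \ref{def:regupto}, regularity of the subsystem up to $d_0-1$; plugging the resulting vanishings of $H_i(K_{m-1})$ into \eqref{eq:les} in degree $d_0$ should kill $H_i(K_m)_{d_0}$ for $i\geq 2$, leaving $\dim H_1(K_m)_{d_0}=0$ forced by (2). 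The delicate point — propagating the subsystem's vanishing from degrees $<d_0$ to degree $d_0$ itself, which is where \eqref{eq:les} requires it to close — is the technical crux and will most likely have to be handled by first extracting (2) for the subsystem from (2) for the full system via coefficient-wise comparison of the two Hilbert series.
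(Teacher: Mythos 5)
Your architecture (induction on $m$ driven by the long exact sequence \eqref{eq:les}, with Hilbert-series or Euler-characteristic bookkeeping) is the right one and matches the proof the paper gives for its multi-graded version, Lemma \ref{main}; your steps (1)$\Rightarrow$(3) and (1)$\Rightarrow$(2) go through as described (the latter after strengthening the outer inductive hypothesis to the vanishing of \emph{all} $H_i$, $i\geq 1$, which is routine). But one idea is missing, and it is exactly the ``first non-vanishing degree'' argument the paper uses for 3$\,\Rightarrow\,$1; its absence leaves two of your implications open. First, your treatment of (1)$\Leftrightarrow$(3) only shows that (1) is equivalent to the \emph{conjunction} of $H_1(K(h_1,\dots ,h_{m-1})_\bullet )_{\leq d}=0$ and $H_1(K(h_1,\dots ,h_m)_\bullet )_{\leq d}=0$; for (3)$\Rightarrow$(1) you must still show the second vanishing forces the first. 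Take the minimal $d'$ with $H_1(K(h_1,\dots ,h_{m-1})_\bullet )_{d'}\neq 0$: since $\deg h_m>0$, the term $H_1(K(h_1,\dots ,h_{m-1})_\bullet )_{d'-\deg h_m}$ vanishes, so \eqref{eq:les} injects $H_1(K(h_1,\dots ,h_{m-1})_\bullet )_{d'}$ into $H_1(K(h_1,\dots ,h_m)_\bullet )_{d'}$, whence $d'>d$. Without this, your chain yields only (1)$\Rightarrow$(2)$\Rightarrow$(3) and (1)$\Rightarrow$(3), not the three-fold equivalence.

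Second, the mechanism you propose for the acknowledged crux of (2)$\Rightarrow$(3) --- extracting (2) for the subsystem by coefficient-wise comparison --- would not work as stated. Setting $\Delta _j(t)={\rm HS}_{S/\langle h_1,\dots ,h_j\rangle }(t)-\prod _{i\leq j}(1-t^{\deg h_i}){\rm HS}_S(t)$ and $K^{(j)}=\ker \bigl(\times h_j:(S/\langle h_1,\dots ,h_{j-1}\rangle )(-\deg h_j)\rightarrow S/\langle h_1,\dots ,h_{j-1}\rangle \bigr)$, the four-term exact sequence gives $\Delta _m=(1-t^{\deg h_m})\Delta _{m-1}+{\rm HS}_{K^{(m)}}$; the factor $1-t^{\deg h_m}$ introduces negative coefficients, so $\Delta _m\equiv _{\leq d}0$ does not let you read off $\Delta _{m-1}\equiv _{\leq d}0$ degree by degree ($\Delta _{m-1}$ is not even obviously coefficient-wise non-negative). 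The same minimal-degree idea repairs this and makes the nested induction on $d_0$ unnecessary: expanding recursively, $\Delta _m=\sum _j\bigl(\prod _{i>j}(1-t^{\deg h_i})\bigr){\rm HS}_{K^{(j)}}$, and at the smallest degree $d^*$ at which some $K^{(j)}$ is non-zero every factor $\prod _{i>j}(1-t^{\deg h_i})$ contributes only its constant term $1$, so the coefficient of $\Delta _m$ at $d^*$ equals $\sum _j\dim K^{(j)}_{d^*}>0$. Hence (2) forces $d^*>d$, which is precisely condition (1), and (3) then follows from your (1)$\Rightarrow$(3). In short, the plan is salvageable, but the single lemma that closes both gaps --- passing to the first degree where something fails and noting that nothing below it can interfere --- is absent from your write-up.
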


Since \[H_1(K(h_1,\dots ,h_m)_\bullet )_{d}={\rm Syz}(h_1,\dots ,h_m)_d/{\rm KSyz}(h_1,\dots ,h_m)_d,\] for the value $d_{{\it KSyz}}$ in Definition \ref{def:12}, note that we have 
\[d_{{\it KSyz}}(f_1,\dots ,f_m)=\inf \{d\mid H_1(K(f^{top}_1,\dots ,f^{top}_m)_\bullet )_{d}\not =0\}\cup \{\infty \}.\]

For a non-semi-regular system over a sufficiently large field, we see a relation between the first fall degree and the degree of regularity:
\begin{Thm}\label{thm:20230222c}
For a non-semi-regular system $h_1,\dots, h_m\in \mathbb{F}_q[x_1,\dots, x_n]$, we have
\[d_{{\it KSyz}}(h_1,\dots ,h_m)+1\leq  d_{\it reg}(h_1,\dots ,h_m).\] 
Moreover, if $\deg h_i=d_0\geq 1$ and $q>\min \{d_{{\it ff}}(h_1,\dots ,h_m),d_{{\it KSyz}}(h_1,\dots ,h_m)\}$, we have 
\[d_{{\it ff}}(h_1,\dots ,h_m)+1\leq  d_{\it reg}(h_1,\dots ,h_m).\]
\end{Thm}
\begin{proof}
If $H_1(K(h_1,\dots, h_m)_\bullet )_{\leq d}=0$ for any $1\leq d\leq d_{\it reg}(h_1,\dots ,h_m)-1$, then the system $h_1,\dots, h_m$ is regular up to degree $d_{\it reg}(h_1,\dots ,h_m)-1$ by Proposition \ref{prop:Die15}. Namely, the system $h_1,\dots ,h_m$ is semi-regular which contradicts the assumption. Hence, there exists $1\leq d\leq d_{\it reg}(h_1,\dots ,h_m)-1$ such that $H_1(K(h_1,\dots, h_m)_\bullet )_{\leq d}\not =0$ and we have $d_{{\it KSyz}}(h_1,\dots ,h_m)\leq  d_{\it reg}(h_1,\dots ,h_m)-1$.
Furthermore, we assume that $d_0\geq 1$ and $q>\min \{d_{{\it ff}}(h_1,\dots ,h_m),d_{{\it KSyz}}(h_1,$ $\dots ,h_m)\}$.
Then, since $d_{{\it ff}}(h_1,\dots ,h_m)=d_{{\it KSyz}}(h_1,\dots ,h_m)$ by Proposition \ref{prop:3}, we obtain $d_{{\it ff}}(h_1,\dots ,h_m)\leq d_{\it reg}(h_1,\dots ,h_m)-1$.
\end{proof}
More generally, we have a numerical upper bound of the first fall degree over a sufficiently large field. 
\begin{Def}\label{def:20230325a}
For a homogeneous system $h_1,\dots ,h_m\in \mathbb{F}_q[x_1,\dots, x_n]$, we put $\sum _da_dt^d=\prod _{i=1}^{m}(1-t^{d_i})/ (1-t)^n$ and define $D_{\mathbb{Z}_{\geq 0}}(h_1,\dots, h_m)=\inf \{d\mid a_d<0\}\cup \{\infty \}$. 
\end{Def}
\begin{Thm}\label{thm:20230221a}
For a homogeneous system $h_1,\dots, h_m\in \mathbb{F}_q[x_1,\dots, x_n]$, we have
\[d_{{\it KSyz}}(h_1,\dots ,h_m)\leq  D_{\mathbb{Z}_{\geq 0}}(h_1,\dots ,h_m).\] 
Moreover, if $\deg h_i=d_0\geq 1$ and $q>\min \{d_{{\it ff}}(h_1,\dots ,h_m),d_{{\it KSyz}}(h_1,\dots ,h_m)\}$, we have 
\[d_{{\it ff}}(h_1,\dots ,h_m)\leq  D_{\mathbb{Z}_{\geq 0}}(h_1,\dots ,h_m).\]
\end{Thm}
\begin{proof}
Since it is obviously for $D_{\mathbb{Z}_{\geq 0}}=\infty $, we may assume that $d_0:=D_{\mathbb{Z}_{\geq 0}}\in \mathbb{Z}_{\geq 0}$. By Proposition \ref{prop:Die15}, we have $H_1(K(h_1,\dots, h_m)_\bullet )_{\leq d_0}\not =0$. It follows that $d_{{\it KSyz}}(h_1,\dots ,h_m)\leq d_0$. 
Moreover, we have $d_{{\it ff}}(h_1,\dots ,h_m)=d_{{\it KSyz}}(h_1,\dots ,h_m)$ under the assumption $d_0\geq 1$ and $q>\min \{d_{{\it ff}}(h_1,\dots ,h_m),d_{{\it KSyz}}(h_1,\dots ,h_m)\}$. Thus, $d_{{\it ff}}(h_1,\dots ,h_m)\leq  D_{\mathbb{Z}_{\geq 0}}(h_1,\dots ,h_m)$. 
\end{proof}
\begin{Rem}
{\rm For any homogeneous system $h_1,\dots ,h_m$, we have $D_{\it reg}(h_1,\dots, h_m)\leq D_{\mathbb{Z}_{\geq 0}}(h_1,\dots ,h_m)$}.
\end{Rem}
\begin{Cor}\label{cor:20230325b}
Suppose $D_{\it reg}(h_1,\dots, h_m)=D_{\mathbb{Z}_{\geq 0}}(h_1,\dots ,$ $h_m)$. 
For a homogeneous system $h_1,\dots, h_m\in \mathbb{F}_q[x_1,\dots, x_n]$, we have
\begin{equation}\label{eq:20230222a}
d_{{\it KSyz}}(h_1,\dots ,h_m)\leq d_{\it reg}(h_1,\dots ,h_m).
\end{equation}
Equality holds in \eqref{eq:20230222a} if and only if $h_1,\dots, h_m$ is semi-regular. 
Furthermore, if $\deg h_i=d_0\geq 1$ and $q>\min \{d_{{\it ff}}(h_1,\dots ,h_m),d_{{\it KSyz}}(h_1,\dots ,h_m)\}$, we have 
\begin{equation}\label{eq:20230222b}
d_{{\it ff}}(h_1,\dots ,h_m)\leq  d_{\it reg}(h_1,\dots ,h_m).
\end{equation}
Equality holds in \eqref{eq:20230222b} if and only if $h_1,\dots, h_m$ is semi-regular. 
\end{Cor}
\begin{proof}
If the system $h_1,\dots ,h_m$ is not semi-regular, then \eqref{eq:20230222a} holds by Theorem \ref{thm:20230222c}. 
We assume that the system $h_1,\dots, h_m$ is semi-regular. Then, we have $d_{\it KSyz}(h_1,\dots, h_m)\geq d_{\it reg}(h_1,\dots, h_m)$ and $d_{\it reg}(h_1,\dots, h_m)=D_{\it reg}(h_1,\dots, h_m)$. Moreover, by Theorem \ref{thm:20230221a}, $d_{\it KSyz}(h_1,\dots, h_m)\leq D_{\mathbb{Z}_{\geq 0}}(h_1,\dots, h_m)=D_{\it reg}(h_1,\dots, h_m)=d_{\it reg}(h_1,\dots, h_m)$.
Hence, we obtain $d_{{\it KSyz}}(h_1,\dots ,h_m)= d_{\it reg}(h_1,\dots ,h_m)$. 
Conversely, if $d_{{\it KSyz}}(h_1,\dots ,h_m)= d_{\it reg}(h_1,\dots ,h_m)$ holds, then $H_1(K(h_1,\dots, h_m)_\bullet )_d=0$ for every $1\leq d\leq d_{\it reg}(h_1,\dots ,h_m)-1$. By Proposition \ref{prop:Die15}, the system $h_1,\dots ,h_m$ is regular up to degree $d_{\it reg}(h_1,\dots ,h_m)-1$ and namely is semi-regular.
Since $d_{{\it ff}}(h_1,\dots ,h_m)=d_{{\it KSyz}}(h_1,\dots ,h_m)$ holds under the assumption $d_0\geq 1$ and $q>\min \{d_{{\it ff}}(h_1,\dots ,h_m),d_{{\it KSyz}}(h_1,\dots ,h_m)\}$, the remainder of the assertion also holds. 
\end{proof}
\begin{Rem}
{\rm By Lemma \ref{lem:2}, the condition $q> D_{\mathbb{Z}_{\geq 0}}(h_1,\dots, h_m)$ depending on the degrees and the number of variables gives that the assumption $q>\min \{d_{{\it ff}}(h_1,\dots ,h_m),d_{{\it KSyz}}(h_1,\dots ,h_m)\}$ holds. 
%Thus, the assumption of the assertions after Proposition \ref{prop:Die15} in this section is checked numerical conditions which depend on the degrees, the number of variables, and the order of field.
}
\end{Rem}

\begin{Rem}
In this article, the assumption $q>\min \{d_{{\it ff}}(h_1,\dots ,h_m),d_{{\it KSyz}}(h_1,$ $\dots ,h_m)\}$ is only used to derive that $d_{{\it ff}}(h_1,\dots ,h_m)=d_{{\it KSyz}}(h_1,\dots ,h_m)$. All our assertions about upper bounds of the first fall degree hold for any polynomial system satisfying $d_{{\it ff}}(h_1,\dots ,h_m)=d_{{\it KSyz}}(h_1,\dots ,h_m)$.
\end{Rem}

%%%%%
\section{Estimation using a multi-grading}\label{sec:multi}
In this section, we introduce a value using a multi-degree to approximate the solving degree and, by extending Proposition \ref{prop:Die15} to a multi-grading, prove that the first fall degree of a multi-graded polynomial system over a sufficiently large field is bounded by this value. 
Moreover, we provide the theoretical assumption for applying the XL algorithm with a kernel search to a multi-graded polynomial system.

In this article, put $S=\mathbb{F}_q[{\bf x}_1,\dots, {\bf x}_s]$ and define $n_i=\sharp {\bf x}_i$. For ${\bf d}=(d_1,\dots ,d_s)\in \mathbb{Z}_{\geq 0}^s$ and variables $t_1,\dots ,t_{s}$, denote ${\bf t}^{{\bf d}}=t_1^{d_1}\cdots t_s^{d_s}$ and $|{\bf d}|=d_1+\cdots +d_s$. For an order on $\mathbb{Z}_{\geq 0}^s$, the symbol $\infty $ is defined as ${\bf d}\prec \infty $ for any ${\bf d}\in \mathbb{Z}_{\geq 0}^s$.

\subsection{Extend Proposition \ref{prop:Die15} to a multi-grading}\label{ssec:extmulti}

In this subsection, we show that a value in the following definition gives an upper bound for the first fall degree of a multi-graded polynomial system.

\begin{Def}\label{def:9}
Let $S$ be the $\mathbb{Z}_{\geq 0}^s$-graded polynomial ring.
For $\mathbb{Z}_{\geq 0}^s$-homogeneous polynomials $h_1,\dots ,h_m$ in $S$, we put 
%\begin{equation}\label{eq:mltiHS}
%\sum _{{\bf d}\in \mathbb{Z}_{\geq 0}^s}a_{{\bf d}}{\bf t}^{{\bf d}}=\frac{\prod _{i=1}^m(1-{\bf t}^{\deg h_i})}{(1-t_1)^{n_1}\cdots (1-t_s)^{n_s}},\end{equation}
\begin{equation}\label{eq:mltiHS}
\sum _{{\bf d}\in \mathbb{Z}_{\geq 0}^s}a_{{\bf d}}{\bf t}^{{\bf d}}=\prod _{i=1}^m(1-{\bf t}^{\deg h_i}){\rm HS}_S({\bf t}),
\end{equation}
 and define $D_{\mathbb{Z}_{\geq 0}^s}(h_1,\dots ,h_m):=\inf \{|{\bf d}|: a_{{\bf d}}<0\}\cup \{\infty \}.$
Moreover, for a well-ordering $\prec $ on $\mathbb{Z}_{\geq 0}^s$, we define ${\bf D}_{\mathbb{Z}_{\geq 0}^s,\prec}(h_1,\dots ,h_m):=\inf_{\prec }\{{\bf d}\mid a_{{\bf d}}<0\}\cup \{\infty \}$.
\end{Def}

The value $D_{\mathbb{Z}_{\geq 0}^s}$ in Definition \ref{def:9} is similar to $D_{{\it mgd}}$ in \cite{Nak20b}, but our value is also available for a wighted degree and is a more general concept.
Moreover, we extend Definition \ref{def:12} to the following:
\begin{Def}\label{def:123}
For $\mathbb{Z}_{\geq 0}^s$-homogeneous polynomial $h_1,\dots ,$ $h_m$ in the $\mathbb{Z}_{\geq 0}^s$-graded polynomial ring $S$ with a well-ordering $\prec $ on $\mathbb{Z}_{\geq 0}^s$, we define ${\bf d}_{{\it KSyz},\prec}$ as 
\[{\bf d}_{{\it KSyz},\prec }(h_1,\dots ,h_m)=\inf _{\prec } \{{\bf d}\mid {\rm Syz}(h_1,\dots ,h_m)_{{\bf d}}\not ={\rm KSyz}(h_1,\dots ,h_m)_{{\bf d}}\}\cup \{\infty \}.\]
If $s=1$, we denote this by $d_{{\it KSyz},\prec }$. 
\end{Def}

\noindent For the standard grading, i.e. $\deg _{\mathbb{Z}_{\geq 0}}x_i=1$, the value $d_{{\it KSyz},<}$ coincides with $d_{{\it KSyz}}$ in Definition \ref{def:12}.
We can extend Definition \ref{def:regupto} and Proposition \ref{prop:Die15} to a multi-grading as follows.

\begin{Def}\label{def:7}
Let $S$ be the $\mathbb{Z}_{\geq 0}^s$-graded polynomial ring and $\preceq $ be a well-ordering on $\mathbb{Z}_{\geq 0}^s$.
Then, $\mathbb{Z}_{\geq 0}^s$-homogeneous system $h_1,\dots ,h_m$ is {\rm regular up to degree {\bf d}} if the following multiplication map $\times h_i$ by $h_i$ is injective for each $i=1,\dots ,m$.
\[\hspace{-15pt}\times h_i:(S/\langle h_1,\dots ,h_{i-1}\rangle )_{{\bf d}_0-\deg h_i}\rightarrow (S/\langle h_1,\dots ,h_{i-1}\rangle )_{{\bf d}_0} \quad (\forall {\bf d}_0\preceq {\bf d}).\]
\end{Def}

\begin{Lem}\label{main}
Let $S$ be the $\mathbb{Z}_{\geq 0}^s$-graded polynomial ring and $\prec $ be a well-ordering on $\mathbb{Z}_{\geq 0}^s$ compatible with $< $ on $\mathbb{Z}_{\geq 0}$ such that ${\bf a}\prec {\bf b}$ if $|{\bf a}|<|{\bf b}|, {\bf a},{\bf b}\in \mathbb{Z}_{\geq 0}^s$.
For $\mathbb{Z}^s_{\geq 0}$-homogeneous system $h_1,\dots ,h_m$ with ${\bf d}_i:=\deg h_i\not ={\bf 0}$, the following conditions are equivalent:
\begin{enumerate}
\item $h_1,\dots ,h_m$ is regular up to degree ${\bf d}$
\item ${\rm HS}_{S/\langle h_1,\dots ,h_m\rangle }({\bf t})\equiv_{\preceq {\bf d}}\prod _{i=1}^m(1-{\bf t}^{\deg h_i}) {\rm HS}_S({\bf t})$
\item $H_1(K(h_1,\dots ,h_m)_\bullet )_{\preceq {\bf d}}=\{0\}$
\end{enumerate}
\end{Lem}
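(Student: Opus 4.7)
The plan is to prove the equivalence by induction on $m$, mirroring the single-graded argument of Diem (Proposition \ref{prop:Die15}) but carefully tracking degrees under the well-ordering $\prec $. The base case $m=1$ is immediate, since the Koszul complex reduces to $0\to S(-{\bf d}_1)\xrightarrow{\times h_1}S\to 0$ and all three conditions become equivalent to the injectivity of $\times h_1$ in degrees $\preceq {\bf d}$. The structural observation that drives every inductive step is that, since ${\bf d}_i\neq {\bf 0}$ forces $|{\bf d}_i|\geq 1$, for any ${\bf d}_0\preceq {\bf d}$ one has $|{\bf d}_0-{\bf d}_i|<|{\bf d}_0|\leq |{\bf d}|$, so the compatibility of $\prec $ with $|\cdot |$ yields ${\bf d}_0-{\bf d}_i\prec {\bf d}$. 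This is what allows the inductive hypothesis to be invoked on the shifted terms appearing in (\ref{eq:les}) and in the analogous short exact sequence.

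For the inductive step at level $m$, regularity of $(h_1,\dots ,h_m)$ up to ${\bf d}$ manifestly restricts to regularity of the subsystem $(h_1,\dots ,h_{m-1})$, so induction supplies (2) and (3) at level $m-1$. To deduce $(1)\Rightarrow (2)$, I would combine the injectivity of $\times h_m$ on $S/\langle h_1,\dots ,h_{m-1}\rangle $ with the short exact sequence
\[0\to (S/\langle h_1,\dots ,h_{m-1}\rangle )(-{\bf d}_m)\xrightarrow{\times h_m} S/\langle h_1,\dots ,h_{m-1}\rangle \to S/\langle h_1,\dots ,h_m\rangle \to 0\]
using Hilbert-series additivity and the inductive identity at level $m-1$. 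For $(1)\Rightarrow (3)$, the same injectivity plugged into the bottom fragment of (\ref{eq:les}), together with the inductive vanishing of $H_1(K(h_1,\dots ,h_{m-1})_\bullet )_{\preceq {\bf d}}$, forces $H_1(K(h_1,\dots ,h_m)_\bullet )_{\preceq {\bf d}}=\{0\}$ by exactness.

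For the converses I would run (\ref{eq:les}) in the opposite direction. For $(3)\Rightarrow (1)$, the fragment
\[H_1(K(h_1,\dots ,h_{m-1})_\bullet )(-{\bf d}_m)\xrightarrow{\times h_m} H_1(K(h_1,\dots ,h_{m-1})_\bullet )\to H_1(K(h_1,\dots ,h_m)_\bullet )\]
together with the hypothesis shows that $\times h_m$ is surjective onto $H_1(K(h_1,\dots ,h_{m-1})_\bullet )_{{\bf d}_0}$ for every ${\bf d}_0\preceq {\bf d}$; a $\prec $-minimum-degree argument on any putative nonvanishing ${\bf d}_0$ (using ${\bf d}_0-{\bf d}_m\prec {\bf d}_0$ to vanish the source of $\times h_m$) yields a contradiction, so (3) descends to the subsystem, induction gives its regularity, and the next fragment of (\ref{eq:les}) forces $\ker (\times h_m)=\{0\}$ on $S/\langle h_1,\dots ,h_{m-1}\rangle $ up to ${\bf d}$. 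For $(2)\Rightarrow (1)$, set $A={\rm HS}_{S/\langle h_1,\dots ,h_{m-1}\rangle }-\prod _{i<m}(1-{\bf t}^{{\bf d}_i}){\rm HS}_S$ and $K=\ker (\times h_m)$; the Fr\"oberg-type bound \cite{Fro85}, carried over to the $\mathbb{Z}_{\geq 0}^s$-graded setting, gives $A\geq 0$ and ${\rm HS}_K\geq 0$ coefficient-wise in the relevant range, and the Hilbert-series identity from the four-term exact sequence combined with (2) reduces to $a_{{\bf d}_0}+\dim K_{{\bf d}_0}=a_{{\bf d}_0-{\bf d}_m}$ for ${\bf d}_0\preceq {\bf d}$ (where $a_{\bf e}$ is the ${\bf t}^{\bf e}$-coefficient of $A$); a $\prec $-induction then forces $A\equiv _{\preceq {\bf d}}0$ and ${\rm HS}_K\equiv _{\preceq {\bf d}}0$, which supplies (2), hence by induction (1), for the subsystem together with the injectivity of $\times h_m$ needed for (1) at level $m$.

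The main obstacle is the degree bookkeeping: every application of (\ref{eq:les}) or the short exact sequence involves a shift by ${\bf d}_m$, and each invocation of the inductive hypothesis must land in a degree region controlled by $\preceq {\bf d}$. The compatibility of $\prec $ with $|\cdot |$ is the hinge that makes both the $\prec $-minimality argument in $(3)\Rightarrow (1)$ and the $\prec $-induction in $(2)\Rightarrow (1)$ go through, and it is the only place in the proof where this compatibility is actually used. A secondary delicate point is the Fr\"oberg-type termwise nonnegativity $A\geq 0$ in the multi-graded case: the standard Koszul-homology argument carries over once one tracks the range of degrees in which the expected Hilbert series remains nonnegative, but this bookkeeping deserves care.
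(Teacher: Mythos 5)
Your overall strategy coincides with the paper's (and with Diem's original single-graded argument that both of you are adapting): induction on $m$, the Koszul long exact sequence \eqref{eq:les}, and a $\prec$-minimal-degree argument that uses the compatibility of $\prec$ with $|\cdot|$ to kill the shifted term $(-{\bf d}_m)$. Your $(1)\Rightarrow(2)$ and $(3)\Rightarrow(1)$ are exactly the paper's steps --- in particular the descent of the vanishing of $H_1$ to the subsystem via the $\prec$-minimal nonvanishing degree ${\bf d}'$, followed by reading off injectivity of $\times h_l$ from the tail of \eqref{eq:les}, is precisely what the paper writes out. Proving $(1)\Rightarrow(3)$ directly rather than routing through condition 2 is a cosmetic rearrangement.

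The one step that does not survive as literally stated is the ``Fr\"{o}berg-type bound'' in your $(2)\Rightarrow(1)$. The unconditional coefficient-wise inequality $A={\rm HS}_{S/\langle h_1,\dots ,h_{m-1}\rangle }-\prod _{i<m}(1-{\bf t}^{{\bf d}_i}){\rm HS}_S\geq 0$ is false in general, already in one variable: for $h_1=h_2=h_3=x$ in $\mathbb{F}[x]$ the series $(1-t)^3/(1-t)=1-2t+t^2$ has degree-$2$ coefficient $1$, while ${\rm HS}_{\mathbb{F}[x]/\langle x\rangle }=1$ has degree-$2$ coefficient $0$. What your $\prec$-induction actually needs, and what is true, is the conditional statement: if $A_{{\bf e}}=0$ for all ${\bf e}\prec {\bf d}_0$, then $A_{{\bf d}_0}\geq 0$. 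This conditional nonnegativity cannot be cited as a black box; it must be established by interleaving the induction on $m$ with the induction on ${\bf d}_0$ (equality of $A$ in degrees $\prec {\bf d}_0$ gives, via the level-$(m-1)$ equivalence, regularity of the subsystem in those degrees, and one more application of ${\rm HS}_{h_{m-1}\cdot (S/J')}\leq {\bf t}^{{\bf d}_{m-1}}{\rm HS}_{S/J'}$ then yields the inequality at ${\bf d}_0$). With that correction your recursion $a_{{\bf d}_0}+\dim K_{{\bf d}_0}=a_{{\bf d}_0-{\bf d}_m}$ does force $A\equiv _{\preceq {\bf d}}0$ and $K_{\preceq {\bf d}}=0$ as claimed. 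To be fair, the paper elides exactly this point: its proof of $2\Rightarrow 3$ invokes ``the injection $\times h_m$'' without deriving injectivity from condition 2, deferring wholesale to \cite{Die15}; so, modulo stating the nonnegativity in its correct conditional form, your account of this direction is actually more explicit than the paper's.
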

\begin{proof}
The proof proceeds in the same way as Proposition \ref{prop:Die15} (\cite{Die15}).
For the assertion 1$\;\Leftrightarrow \;$2, we suppose that the assertion holds for $m-1$. 
Then, for any ${\bf d}_0\preceq {\bf d}$, by the injection 
\[\hspace{-15pt}\times h_m :(S/\langle h_1,\dots ,h_{m-1}\rangle )_{{\bf d}_0-{\bf d}_m}\rightarrow (S/\langle h_1,\dots ,h_{m-1}\rangle )_{{\bf d}_0},\]
we have ${\rm HS}_{S/\langle h_1,\dots ,h_m\rangle }({\bf t})={\rm HS}_R({\bf t})-{\rm HS}_{\langle h_m\rangle _{R}}({\bf t})\equiv _{\preceq {\bf d}}{\rm HS}_R({\bf t})-{\rm HS}_{R}({\bf t})\cdot {\bf t}^{\deg h_m}=(1-{\bf t}^{\deg h_m}){\rm HS}_{R}({\bf t})$ where $R=S/\langle h_1,\dots ,h_{m-1}\rangle $.
Conversely, the map $\times h_m$ is injective if ${\rm HS}_{S/\langle h_1,\dots ,h_m\rangle }({\bf t})\equiv _{\preceq {\bf d}}(1-{\bf t}^{\deg h_m}){\rm HS}_{R}({\bf t})$ holds.

For the assertion 1$\;\Rightarrow \;$3, we suppose that the assertion holds for $m-1$. 
For any ${\bf d}_0\preceq {\bf d}$, the long exact sequence \eqref{eq:les} induces an exact sequence 
\[\hspace{-15pt}\cdots \rightarrow H_1(K(h_1,\dots ,h_{m-1})_\bullet )_{{\bf d}_0}\rightarrow H_1(K(h_1,\dots ,h_{m})_\bullet )_{{\bf d}_0}\]
\[\hspace{-15pt}\rightarrow (S/\langle h_1,\dots ,h_{m-1}\rangle )(-{\bf d}_m)_{{\bf d}_0}\overset{\times h_m}{\rightarrow }(S/\langle h_1,\dots ,h_{m-1}\rangle )_{{\bf d}_0}.\]
Then, by an assumption of the induction and the injection $\times h_m$, we have \[H_1(K(h_1,\dots ,h_m)_\bullet )_{{\bf d}_0}=0.\]

For the assertion 3$\;\Rightarrow \;$1, it suffices to prove the following statement and the case $l=m-1$ in particularly:
\[H_1(K(h_1,\dots ,h_m))_{\preceq {\bf d}}=0\Rightarrow  H_1(K(h_1,\dots ,h_l))_{\preceq {\bf d}}=0, 1\leq \forall l\leq m.\] 
Indeed, by the long exact sequence 
\[\hspace{-10pt}\cdots \rightarrow H_1(K(h_1,\dots ,h_l)_\bullet )\rightarrow (S/\langle h_1,\dots ,h_{l-1}\rangle )(-{\bf d}_l)\overset{\times h_l}{\rightarrow }S/\langle h_1,\dots ,h_{l-1}\rangle , \]
the right hand condition for each $1\leq l\leq m$ gives the injection 
\[\hspace{-10pt}\times h_l :(S/\langle h_1,\dots ,h_{l-1}\rangle )(-{\bf d}_l)_{{\bf d}_0}\overset{\times h_l}{\rightarrow }(S/\langle h_1,\dots ,h_{l-1}\rangle)_{ {\bf d}_0}\quad ( \forall {\bf d}_0\preceq {\bf d}).\]
Suppose that there exists the minimum ${\bf d}'$ such that $H_1(K(h_1,\dots ,h_{m-1})_\bullet )_{{\bf d}'}\not =0$.
Then, by $|{\bf d}_m|>0$ and ${\bf d}'\succ {\bf d}'-{\bf d}_m$, 
\[H_1(K(h_1,\dots ,h_{m-1})_\bullet )(-{\bf d}_m)_{{\bf d}'}\]\[=H_1(K(h_1,\dots ,h_{m-1})_\bullet )_{{\bf d}'-{\bf d}_m}=0.\]
Hence we have $H_1(K(h_1,\dots ,h_m)_\bullet )_{{\bf d}'}\not =0$ by the short exact sequence
\[\hspace{-20pt}H_1(K(h_1,\dots ,h_{m-1})_\bullet )(-{\bf d}_m)_{{\bf d}'}\overset{\times h_m}{\rightarrow }H_1(K(h_1,\dots ,h_{m-1})_\bullet )_{{\bf d}'}\]\[\rightarrow H_1(K(h_1,\dots ,h_{m})_\bullet )_{{\bf d}'},\]
in the long exact sequence \eqref{eq:les}.
Since $H_1(K(h_1,\dots ,h_m)_\bullet )_{\preceq {\bf d}}$ $=0$, we have ${\bf d}\prec {\bf d}'$.
Therefore $H_1(K(h_1,\dots ,h_{m-1})_\bullet )_{\preceq {\bf d}}=0$.
\end{proof}

By this lemma, we see that the values in Definition \ref{def:7} are an upper bound for the first fall degree of a multi-graded polynomial system.

\begin{Thm}\label{thm:8}
Let $S$ be a $\mathbb{Z}_{\geq 0}^s$-graded polynomial ring. Then $S$ is $\mathbb{Z}_{\geq 0}$-graded with $S_d=\oplus _{d=|{\bf d}|}S_{{\bf d}}$. 
For $\mathbb{Z}_{\geq 0}^s$-homogeneous polynomials $h_1,\dots ,h_m$ with $\deg _{\mathbb{Z}_{\geq 0}^s}h_i\not ={\bf 0}$, we have 
\[d_{{\it KSyz}}(h_1,\dots ,h_m)\leq  D_{\mathbb{Z}_{\geq 0}^s}(h_1,\dots ,h_m).\]
Moreover, if $S=\mathbb{F}[{\bf x}_1,\dots ,{\bf x}_s]$ with $\deg _{\mathbb{Z}_{\geq 0}^s}x_{ij}={\bf e}_i$, $\deg h_i=d_0\geq 2$ and $q>\min \{d_{{\it ff}}(h_1,\dots ,h_m),d_{{\it KSyz}}(h_1,\dots ,h_m)\}$, we have  
\[d_{{\it ff}}(h_1,\dots ,h_m)\leq D_{\mathbb{Z}_{\geq 0}^s}(h_1,\dots ,h_m).\]
\end{Thm}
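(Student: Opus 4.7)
The plan is to reduce the statement to Lemma \ref{main} by selecting a suitable well-ordering $\prec$ on $\mathbb{Z}_{\geq 0}^s$ that is compatible with the total-degree ordering $<$ on $\mathbb{Z}_{\geq 0}$ in the sense required by the hypothesis of Lemma \ref{main} (for example, the graded lexicographic ordering). The key input will be the positivity of Hilbert series: the coefficients of $\mathrm{HS}_{S/\langle h_1,\ldots,h_m\rangle}(\mathbf{t})$ are all non-negative, while by definition the coefficient of $\mathbf{t}^{\mathbf{d}^*}$ in $\prod_{i=1}^m(1-\mathbf{t}^{\deg h_i})\mathrm{HS}_S(\mathbf{t})$ is strictly negative for some $\mathbf{d}^*$ with $|\mathbf{d}^*| = D_{\mathbb{Z}_{\geq 0}^s}$.

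For the first inequality, I would argue by contradiction. Pick $\mathbf{d}^*$ with $|\mathbf{d}^*| = D_{\mathbb{Z}_{\geq 0}^s}(h_1,\ldots,h_m)$ and $a_{\mathbf{d}^*} < 0$. Suppose for contradiction that $h_1,\ldots,h_m$ were regular up to degree $\mathbf{d}^*$ in the sense of Definition \ref{def:7}. Then implication $1\Rightarrow 2$ of Lemma \ref{main} would give
\[\mathrm{HS}_{S/\langle h_1,\ldots,h_m\rangle}(\mathbf{t})\equiv_{\preceq \mathbf{d}^*}\prod_{i=1}^m(1-\mathbf{t}^{\deg h_i})\mathrm{HS}_S(\mathbf{t}),\]
forcing the coefficient of $\mathbf{t}^{\mathbf{d}^*}$ on the left to be negative, an impossibility. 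Hence by the contrapositive of $3\Rightarrow 1$ in Lemma \ref{main}, there exists $\mathbf{d}_0\preceq\mathbf{d}^*$ with $H_1(K(h_1,\ldots,h_m)_\bullet)_{\mathbf{d}_0}\neq 0$. Compatibility of $\prec$ with total degree gives $|\mathbf{d}_0|\leq |\mathbf{d}^*|=D_{\mathbb{Z}_{\geq 0}^s}$. Since the $\mathbb{Z}_{\geq 0}^s$-grading refines the total-degree grading, the modules $\mathrm{Syz}(h_1,\ldots,h_m)$ and $\mathrm{KSyz}(h_1,\ldots,h_m)$ both split as direct sums of their $\mathbb{Z}_{\geq 0}^s$-homogeneous components, and the nonzero class in $H_1(K(\cdots))_{\mathbf{d}_0}$ lifts to a non-Koszul syzygy of total degree $|\mathbf{d}_0|$, yielding
\[d'_{ff,<}(h_1,\ldots,h_m)\leq |\mathbf{d}_0|\leq D_{\mathbb{Z}_{\geq 0}^s}(h_1,\ldots,h_m).\]

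For the second inequality, I would just combine the first with Proposition \ref{prop:3}: under the hypotheses $\deg h_i=d_0\geq 2$ (with respect to the standard grading induced by $\deg_{\mathbb{Z}_{\geq 0}^s}x_{ij}=\mathbf{e}_i$) and $q>\min\{d_{ff},d'_{ff}\}$, Proposition \ref{prop:3} gives $d_{ff}(h_1,\ldots,h_m)=d'_{ff}(h_1,\ldots,h_m)=d'_{ff,<}(h_1,\ldots,h_m)$, so the bound just proved applies directly.

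The main obstacle I foresee is purely bookkeeping: making precise the compatibility between the $\mathbb{Z}_{\geq 0}^s$-graded statement of Lemma \ref{main} and the single-graded definition of $d'_{ff,<}$. Concretely, I have to confirm that a well-ordering $\prec$ with the required compatibility actually exists (graded lex suffices), that $\mathbf{d}_0\preceq\mathbf{d}^*$ genuinely implies $|\mathbf{d}_0|\leq|\mathbf{d}^*|$ (otherwise $|\mathbf{d}_0|>|\mathbf{d}^*|$ would force $\mathbf{d}^*\prec\mathbf{d}_0$), and that Koszul syzygies are themselves $\mathbb{Z}_{\geq 0}^s$-homogeneous so that the decomposition $H_1(K)_d=\bigoplus_{|\mathbf{d}|=d}H_1(K)_{\mathbf{d}}$ is valid. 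Once these three points are dispatched, everything else follows from Lemma \ref{main} and Proposition \ref{prop:3}.
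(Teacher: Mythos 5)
Your proposal is correct and follows essentially the same route as the paper's proof: fix the graded lexicographic ordering, use the positivity of the Hilbert series coefficients together with the equivalences of Lemma \ref{main} to produce a non-Koszul syzygy in some multidegree $\preceq \mathbf{d}^*$, and then invoke Proposition \ref{prop:3} for the second inequality. Your additional bookkeeping (that $\mathbf{d}_0\preceq\mathbf{d}^*$ forces $|\mathbf{d}_0|\leq|\mathbf{d}^*|$, and that the multigrading refines the total-degree grading of ${\rm Syz}/{\rm KSyz}$) only makes explicit steps the paper leaves implicit.
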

\begin{proof}
Since the assertion is on $S_d=\oplus _{d=|{\bf d}|}S_{{\bf d}}$, we may fix a well-ordering $\prec $ on $\mathbb{Z}_{\geq 0}^s$ in Lemma \ref{main} as the graded lexicographic monomial ordering.
%It suffices to show that the statement holds for the top homogeneous component $h_1,\dots ,h_m$ where $h_i=f_i^{top}$.
When $D_{\mathbb{Z}_{\geq 0}^s}=\infty$, the statement is obviously. 
If $D_{\mathbb{Z}_{\geq 0}^s}<\infty$, the power series \eqref{eq:mltiHS} has a negative coefficient at a certain ${\bf d}$ such that $|{\bf d}|=D_{\mathbb{Z}_{\geq 0}^s}(h_1,\dots ,h_m)$. 
Then, by Lemma \ref{main} and the positivity of the coefficients in the Hilbert series, there exists a non-Koszul syzygy of $\mathbb{Z}_{\geq 0}^s$-degree equal to or less than ${\bf d}$ with respect to $\prec$.
Thus $d_{{\it KSyz},<}\leq  D_{\mathbb{Z}_{\geq 0}^s}(h_1,\dots ,h_m)$ on $\mathbb{Z}_{\geq 0}$-graded $S$ with $S_d=\oplus _{d=|{\bf d}|}S_{\bf d}$.
Assume that $S=\mathbb{F}[{\bf x}_1,\dots ,{\bf x}_s]$ and it is $\mathbb{Z}_{\geq 0}^s$-graded by $\deg _{\mathbb{Z}_{\geq 0}^s}x_{ij}={\bf e}_i$.
For the standard grading, when $d_0\geq 2$ and $q>\min \{d_{{\it ff}}(h_1,\dots ,h_m),d_{{\it KSyz}}(h_1,\dots ,h_m)\}$, we have the last assertion since $d_{{\it ff}}(h_1,\dots ,h_m)=d_{{\it KSyz},<}(h_1,\dots ,h_m)$ by Lemma \ref{prop:3}.
\end{proof}

The following theorem is a simple generalization of Theorem \ref{thm:8}, but it has an application (see \ref{ssec:app}).
\begin{Thm}\label{thm:8gen}
Let $S$ be the $\mathbb{Z}_{\geq 0}^s$-graded polynomial ring and $\prec $ be a well-ordering on $\mathbb{Z}_{\geq 0}^s$ compatible with $<$ on $\mathbb{Z}_{\geq 0}$ such that ${\bf a}\prec {\bf b}$ if $|{\bf a}|<|{\bf b}|,{\bf a},{\bf b}\in \mathbb{Z}_{\geq 0}^s$.
Then, for $\mathbb{Z}_{\geq 0}^s$-homogeneous polynomials $h_1,\dots ,h_m$, we have 
\[{\bf d}_{{\it KSyz},\prec }(h_1,\dots ,h_m)\;\preceq \;{\bf D}_{\mathbb{Z}_{\geq 0}^s,\prec}(h_1,\dots ,h_m).\]
\end{Thm}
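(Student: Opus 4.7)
The plan is to mimic the proof of Theorem \ref{thm:8}, replacing the total-degree ordering on $\mathbb{Z}_{\geq 0}$ by the given well-ordering $\prec$ on $\mathbb{Z}_{\geq 0}^s$ throughout, and then invoking Lemma \ref{main} directly. The compatibility hypothesis ${\bf a}\prec {\bf b}$ whenever $|{\bf a}|<|{\bf b}|$ is exactly what Lemma \ref{main} requires, so that lemma is available without modification.

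First I would dispose of the trivial case ${\bf D}_{\mathbb{Z}_{\geq 0}^s,\prec}(h_1,\dots,h_m)=\infty$, in which the inequality ${\bf d}'_{{\it ff},\prec}(h_1,\dots,h_m)\preceq \infty$ is vacuous by the convention ${\bf d}_0\prec\infty$. Assuming the bound is finite, set ${\bf d}:={\bf D}_{\mathbb{Z}_{\geq 0}^s,\prec}(h_1,\dots,h_m)$, so by definition the coefficient $a_{{\bf d}}$ in the product $\prod_{i=1}^{m}(1-{\bf t}^{\deg h_i})\,{\rm HS}_S({\bf t})$ is strictly negative, while $a_{{\bf d}_0}\geq 0$ for every ${\bf d}_0\prec {\bf d}$.

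Next I would argue by contradiction that the system cannot be regular up to degree ${\bf d}$. Indeed, the Hilbert series ${\rm HS}_{S/\langle h_1,\dots,h_m\rangle}({\bf t})$ has non-negative coefficients since each coefficient is a vector space dimension; in particular its coefficient at ${\bf d}$ is non-negative, whereas $a_{{\bf d}}<0$. Hence the power series identity in condition 2 of Lemma \ref{main} fails at ${\bf d}$, so the equivalence $\equiv_{\preceq {\bf d}}$ does not hold. By the implication $1\Rightarrow 2$ of Lemma \ref{main}, this rules out regularity up to degree ${\bf d}$, and then by the implication $1\Rightarrow 3$ (its contrapositive), we obtain $H_1(K(h_1,\dots,h_m)_\bullet)_{\preceq {\bf d}}\neq\{0\}$.

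From this non-vanishing I would extract some ${\bf d}_0\preceq {\bf d}$ with $H_1(K(h_1,\dots,h_m)_\bullet)_{{\bf d}_0}\neq 0$, i.e.\ with ${\rm Syz}(h_1,\dots,h_m)_{{\bf d}_0}\not={\rm KSyz}(h_1,\dots,h_m)_{{\bf d}_0}$. By the definition of ${\bf d}'_{{\it ff},\prec}$ this yields ${\bf d}'_{{\it ff},\prec}(h_1,\dots,h_m)\preceq {\bf d}_0\preceq {\bf d}$, finishing the proof. The argument contains no real obstacle beyond verifying that the compatibility of $\prec$ with the total degree is precisely the hypothesis needed by Lemma \ref{main}; the only point to double-check is that the Hilbert-series positivity is sufficient to force the contradiction exactly at the $\prec$-minimal multi-degree where a negative coefficient appears, which is immediate.
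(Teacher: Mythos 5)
Your proposal is correct and follows essentially the same route as the paper: dispose of the infinite case, then use the non-negativity of the coefficients of ${\rm HS}_{S/\langle h_1,\dots,h_m\rangle}({\bf t})$ to see that the negative coefficient $a_{\bf d}$ violates condition 2 of Lemma \ref{main}, hence $H_1(K(h_1,\dots,h_m)_\bullet)_{\preceq {\bf d}}\neq\{0\}$, which gives the bound on ${\bf d}'_{{\it ff},\prec}$. (The only quibble is a labelling slip: you need the contrapositive of $3\Rightarrow 1$, or more directly $\neg 2\Rightarrow\neg 3$, rather than ``the contrapositive of $1\Rightarrow 3$''; since Lemma \ref{main} is a three-way equivalence this is immaterial.)
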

\begin{proof}
Since it is obviously for ${\bf D}_{\mathbb{Z}_{\geq 0}^s,\prec}=\infty$, we may assume that ${\bf d}_0:={\bf D}_{\mathbb{Z}_{\geq 0}^s,\prec}(h_1,\dots ,h_m)\in \mathbb{Z}_{\geq 0}^s$.
By Lemma \ref{main} with the same well-ordering as the statement, we have $H_1(K(h_1,\dots ,h_m)_\bullet )_{\preceq {\bf d}_0}\not =\{0\}$.
It follows that ${\bf d}_{{\it KSyz},\prec }(h_1,\dots ,h_m)\;\preceq {\bf d}_0$.
\end{proof}

\subsection{Theoretical assumption for the XL algorithm with a kernel search}\label{ssec:estXL}

As mentioned in Subsection \ref{ssec:solverks}, the XL algorithm generates the Macaulay matrix of degree $d$ and its complexity is estimated as \eqref{eq:cmplxlks}. 
For the generated Macaulay matrix, the value $c_d$ in \eqref{eq:cmplxlks} is the number of columns minus the rank. If $c_d\leq 1$ holds, a kernel vector obtained by a kernel search always corresponds to a solution to the system.
In this subsection, we consider a theoretical assumption for controlling the value $c_d$.
%, and some polynomial systems deduced from an attack against the MPKC satisfy this assumption. 

The number of columns of the Macaulay matrix is that of the monomials of degree $d$, and the row space is isomorphic to the vector space $\<f_1,\dots ,f_m\>_d$ over $\mathbb{F}$.
Hence, the number of columns is $\dim _\mathbb{F}\mathbb{F}[x_1,\dots ,x_n]_d$, and the rank is $\dim _\mathbb{F}\<f_1,\dots ,f_m\>_d$.
Namely, \[c_d=\dim _\mathbb{F}\mathbb{F}[x_1,\dots ,x_n]_d-\dim _\mathbb{F}\<f_1,\dots ,f_m\>_d.\]
%\[c=\dim _\mathbb{F}\mathbb{F}[x_1,\dots ,x_n]_D-\dim _\mathbb{F}\<f_1,\dots ,f_m\>_D=\dim _\mathbb{F}(\mathbb{F}[x_1,\dots ,x_n]_D/\<f_1,\dots ,f_m\>_D).
%The following degree is the minimum degree at which it is possible to omit the term $q^{c-d}$ from the complexity \eqref{eq:cmplxlks} of the XL algorithm with respect to the degree:
The tool \eqref{eq:degree0322} is rewritten as 
\[\hspace{-10pt}d^{\leq 1}=\min \{d>0\mid \dim _\mathbb{F}\mathbb{F}[x_1,\dots ,x_n]_d-\dim _\mathbb{F}\<f_1,\dots ,f_m\>_d\leq 1\}.\]
Since $\dim _\mathbb{F}\mathbb{F}[x_1,\dots ,x_n]_d-\dim _\mathbb{F}\<f_1,\dots ,f_m\>_d =\dim _\mathbb{F}(\mathbb{F}[x_1,$ $\dots ,x_n]_d/\<f_1,\dots ,f_m\>_d)$, we need to consider a case in which it is possible to combinatorially compute the Hilbert series ${\rm HS}_{\mathbb{F}[x_1,\dots ,x_n]/\<f_1,\dots ,f_m\>}(t)$.
%${\rm HS}_{\mathbb{F}[x_1,\dots ,x_n]/\<f_1,\dots ,f_m\>}(t)=\sum _{d\geq 0} \dim _\mathbb{F}(\mathbb{F}[x_1,\dots ,x_n]_d/\<f_1,\dots ,f_m\>_d)t^d$. 

For the dimension $u_d$ explained in Subsection \ref{ssec:solvermgd}, if $u_d>1$, then $c_d>1$.
Hence we have to generally consider an upper bound for $\dim _\mathbb{F}\mathbb{F}[x_1,\dots ,x_n]_d-\dim _\mathbb{F}\<f_1,\dots ,f_m\>_d$.
Moreover, we discuss a more general statement applied to the multi-graded case (see Subsection \ref{ssec:solvermgd}). 
Thus we define the following tool as a generalization of \eqref{eq:degree0322}:
\begin{Def}\label{def:20250914e}
For $\mathbb{Z}_{\geq 0}^s$-homogeneous polynomials $f_1,\dots ,$ $f_m$, an integer $a$ and a well-ordering $\prec $ on $\mathcal{X}\subseteq \mathbb{Z}_{\geq 0}^s$, we define 
%\[{\bf d}_{\mathbb{Z}_{\geq 0}^s,\prec}^{\leq a}(f_1,\dots ,f_m)\]\\[-35pt]
%\begin{equation}\label{eq:degree0322b}
%:=\min _\prec \{{\bf d}\in \mathbb{Z}_{\textcolor{blue}{\geq a}}^s\mid \dim \mathbb{F}[{\bf x}_1,\dots ,{\bf x}_s]_{{\bf d}}-\dim \<f_1,\dots ,f_m\>_{{\bf d}}\leq a\}.
%\end{equation}
\begin{equation}\label{eq:degree0322b}
{\bf d}_{\mathcal{X},\prec}^{\leq a}(f_1,\dots ,f_m):=\min _\prec \{{\bf d}\in \mathcal{X}\mid \dim \mathbb{F}[{\bf x}_1,\dots ,{\bf x}_s]_{{\bf d}}-\dim \<f_1,\dots ,f_m\>_{{\bf d}}\leq a\}.
\end{equation}
In particular, we denote it by ${\bf d}_{\mathbb{Z}_{\geq a}^s,\prec}^{\leq a}$ when $\mathcal{X}=\mathbb{Z}_{\geq a}^s$. Moreover, we define $d_{\mathcal{X},\prec}^{\leq a}(f_1,\dots ,f_m)=|{\bf d}_{\mathcal{X},\prec}^{\leq a}(f_1,\dots ,f_m)|$.
\end{Def}
%Thus considering the Hilbert series ${\rm HS}_{\mathbb{F}[x_1,\dots ,x_n]/\<f_1,\dots ,f_m\>}(t)$ is natural.

As shown in Proposition \ref{prop:Die15} and Lemma \ref{main}, the regularity for the polynomial system provides the partial information of the Hilbert series, and we generalize Definition \ref{def:7} as follows:
\begin{Def}\label{def:0322d}
Let $f_1,\dots ,f_m \in \mathbb{F}[{\bf x}_1,\dots ,{\bf x}_s]$ be a $\mathbb{Z}_{\geq 0}^s$-graded system.
For a set $\mathcal{X}\subseteq \mathbb{Z}_{\geq 0}^s$, the system $f_1,\dots ,f_m$ is {\it regular on $\mathcal{X}$} if, for any $0\leq i\leq m-1$ and ${\bf d}\in \mathcal{X}$, the following map is injective:
\[\times f_{i+1}:(\mathbb{F}[{\bf x}_1,\dots ,{\bf x}_s]/\<f_1,\dots ,f_i\>)_{{\bf d}-\deg f_{i+1}}\rightarrow (\mathbb{F}[{\bf x}_1,\dots ,{\bf x}_s]/\<f_1,\dots ,f_i\>)_{{\bf d}}.\]
\end{Def}

Note that, for a $\mathbb{Z}_{\geq 0}^s$-graded module $M$, we set $M_{{\bf d}}=0$ if ${\bf d}$ has a negative component.
The following lemma is a generalization of the partial statement $``1\Rightarrow 2"$ in Lemma \ref{main}:

\begin{Lem}
Let ${\bf d}=(d_1,\dots ,d_s)\in \mathbb{Z}_{\geq 0}^s$ and $\mathcal{X}_{\bf d}=\{(i_1,\dots ,i_s)\mid i_j\leq d_j\;(\forall j)\}$.
If a $\mathbb{Z}_{\geq 0}^s$-graded system $f_1,\dots ,f_m \in \mathbb{F}[{\bf x}_1,\dots ,{\bf x}_s]$ is regular on $\mathcal{X}_{\bf d}$, for every $(i_1,\dots ,i_s)\in \mathcal{X}_{\bf d}$, the coefficient of $t_1^{i_1}\cdots t_s^{i_s}$ in ${\rm HS}_{\mathbb{F}[{\bf x}_1,\dots ,{\bf x}_s]/\<f_1,\dots ,f_m\>}({\bf t})$ agrees with that in 
\[{\rm HS}_{\mathbb{F}[{\bf x}_1,\dots ,{\bf x}_s]}({\bf t})\cdot \prod _{i=1}^m(1-{\bf t}^{\deg f_i}),\]
where ${\bf t}=(t_1,\dots ,t_s)$. 
\end{Lem}
%\begin{lemma}
%Let $\mathcal{X}\subseteq \mathbb{Z}_{\geq 0}^s$.
%If a $\mathbb{Z}_{\geq 0}^s$-graded system $f_1,\dots ,f_m \in \mathbb{F}[{\bf x}_1,\dots ,{\bf x}_s]$ is regular on $\mathcal{X}$, for every ${\bf d}=(d_1,\dots ,d_s)\in \mathcal{X}$, the coefficient of $t_1^{d_1}\cdots t_s^{d_s}$ in ${\rm HS}_{\mathbb{F}[{\bf x}_1,\dots ,{\bf x}_s]/\<f_1,\dots ,f_m\>}({\bf t})$ agrees with that in 
%\[{\rm HS}_{\mathbb{F}[{\bf x}_1,\dots ,{\bf x}_s]}({\bf t})\cdot \prod _{i=1}^m(1-{\bf t}^{\deg f_i}),\]
%where ${\bf t}=(t_1,\dots ,t_s)$. 
%\end{lemma}

Proposition \ref{prop:0322c} shows that the degree ${\bf d}_{\mathbb{Z}_{\geq a}^s,\prec }^{\leq a}(f_1,\dots ,$ $f_m)$ in Definition \ref{eq:degree0322b} is bounded by the following value under the regularity introduced in Definition \ref{def:0322d}.

\begin{Def}\label{def:20250914a}
For a $\mathbb{Z}_{\geq 0}^s$-graded system $f_1,\dots ,f_m \in \mathbb{F}[{\bf x}_1,\dots ,{\bf x}_s]$ and the well-order $\prec $ on $\mathcal{X}\subseteq \mathbb{Z}_{\geq 0}^s$, we define ${\bf D}_{\mathcal{X},\prec }^{\leq a}(f_1,\dots ,f_m)$ as follows:
Put $\{a_{\bf d}\}_{{\bf  d}\in \mathbb{Z}_{\geq 0}^s}$ as 
$\sum _{{\bf d}\in \mathbb{Z}_{\geq 0}^s}a_{{\bf d}}{\bf t}^{{\bf  d}}={\rm HS}_{\mathbb{F}[{\bf x}_1,\dots ,{\bf x}_s]}({\bf t})\cdot \prod _{i=1}^m(1-{\bf t}^{\deg f_i})$ and define 
\[{\bf D}_{\mathcal{X},\prec }^{\leq a}(f_1,\dots ,f_m)=\min  _\prec \{{\bf d}\in \mathcal{X}\mid a_{{\bf d}}\leq a\},\]
\[{D}_{\mathcal{X},\prec }^{\leq a}(f_1,\dots ,f_m)=|{\bf D}_{\mathcal{X},\prec }^{\leq a}(f_1,\dots ,f_m)|.\]
In particular, we denote it by ${\bf D}_{\mathbb{Z}_{\geq a}^s,\prec }^{\leq a}$ when $\mathcal{X}=\mathbb{Z}_{\geq a}^s$.
\end{Def}

\begin{Prop}\label{prop:0322c}
Let $\prec $ be a well-ordering on $\mathbb{Z}_{\geq 0}^s$. 
If a $\mathbb{Z}_{\geq 0}^s$-graded system $f_1,\dots ,f_m \in \mathbb{F}[{\bf x}_1,\dots ,{\bf x}_s]$ is regular on $\mathcal{X}_{\bf d}$ for every ${\bf d}\prec {\bf d}_{\mathbb{Z}_{\geq a}^s,\prec }^{\leq a}(f_1,\dots ,f_m)$, then
\[{\bf d}_{\mathbb{Z}_{\geq a}^s,\prec }^{\leq a}(f_1,\dots ,f_m)\preceq {\bf D}_{\mathbb{Z}_{\geq a}^s,\prec }^{\leq a}(f_1,\dots ,f_m).\]
%In particular, we have $d^{\leq 1}={\bf D}_{\mathbb{Z}_{\geq 0},<}^{\leq 1}$.
\end{Prop}

\begin{Rem}
{\rm 
Note that ${\bf D}_{\mathbb{Z}_{\geq 0}^s,\prec }={\bf D}_{\mathbb{Z}_{\geq 0}^s,\prec }^{\leq a}$ if $a=-1$, $D_{\mathbb{Z}_{\geq 0}^s}=|{\bf D}_{\mathbb{Z}_{\geq 0}^s,\prec }^{\leq a}|$ if $a=-1$ and $\prec $ is compatible with the standard degree, $D_{{\it reg}}={\bf D}_{\mathbb{Z}_{\geq 0}^s,\prec }^{\leq a}$ if $a=0$ and $s=1$.
%Note that $D_{{\it reg}}={\bf D}_{\mathbb{Z}_{\geq 0}^s,\prec }^{\leq a}$ if $a=0$ and $s=1$, $D_{\mathbb{Z}_{\geq 0}^s}=|{\bf D}_{\mathbb{Z}_{\geq 0}^s,\prec }^{\leq a}|$ if $a=-1$ and $\prec $ is compatible with the standard degree, ${\bf D}_{\mathbb{Z}_{\geq 0}^s,\prec }={\bf D}_{\mathbb{Z}_{\geq 0}^s,\prec }^{\leq a}$ if $a=-1$.
For the system arisen from an attack in multivariate cryptography, a non-positive coefficient in the multivariate power series \eqref{eq:220325a} is often negative, namely ${\bf D}_{\mathbb{Z}_{\geq 0}^s,\prec }^{\leq 0}={\bf D}_{\mathbb{Z}_{\geq 0}^s,\prec }^{\leq -1}$ often holds. }
\end{Rem}

\begin{Prop}\label{cor:20230417a}
Let $f_1,\dots ,f_m $ be homogeneous polynomials in $S=\mathbb{F}[x_1,\dots, x_n]$ such that $\deg f_1=\cdots =\deg f_m\geq 1$.
If $f_1,\dots ,f_m $ is regular on $\{d\mid  0\leq d< D_{\mathbb{Z}_{\geq 0}}(f_1,\dots ,f_m) \}$ and %$q>\min \{d_{{\it ff}}(h_1,\dots ,h_m),d_{{\it KSyz}}(h_1,\dots ,h_m)\}$, 
$q> D_{\mathbb{Z}_{\geq 0}}(f_1,\dots ,f_m)$, 
then we have
\[d_{\it ff}(f_1,\dots ,f_m)=D_{\mathbb{Z}_{\geq 0}}(f_1,\dots ,f_m).\]
\end{Prop}
\begin{proof}
Since $q> D_{\mathbb{Z}_{\geq 0}}(f_1,\dots ,f_m)$, we have $d_{\it ff}(f_1,\dots ,f_m)$ $=d_{\it KSyz}(f_1,\dots ,f_m)\leq D_{\mathbb{Z}_{\geq 0}}(f_1,\dots ,f_m)$ by Theorem \ref{thm:20230221a}. 
Moreover, for $0\leq d<D_{\mathbb{Z}_{\geq 0}}(f_1,\dots ,f_m)$, we have $H_1(K(f_1,\dots ,f_m)_\bullet )_d=0$ by Lemma \ref{main}. 
Hence $d_{\it ff}(f_1,\dots ,f_m)=d_{\it KSyz}(f_1,\dots, f_m)\geq D_{\mathbb{Z}_{\geq 0}}(f_1,\dots ,f_m)$.
\end{proof}

\subsection{Comparison to the previous work in the bilinear case}\label{ssec:20250914b}
In this subsection, we explain the relation between Baena, Cabarcas, and Verbel's work \cite{BCV20} on bilinear systems and our paper.
Namely, we consider homogeneous polynomials $F=\{f_1,\dots, f_m\} \subset \mathbb{F}[{\bf x},{\bf y}]$, $\deg_{\mathbb{Z}_{\geq 0}^2} f_i=(1,1)$, $n_x=\sharp {\bf x}$, and $n_y=\sharp {\bf y}$. 
Here we follow the notation of \cite{BCV20}.
In \cite{BCV20}, they studied Procedure \ref{proc:20250913a} with ${\bf d}=(1,d-1)$ as an algorithm for solving bilinear systems.
They introduced the degree of regularity and the first fall degree on the bi-degree $(1,d-1)$ and determined them precisely under some regularity assumptions. In particular, the first fall degree studied in \cite{BCV20} can be seen as a special case of Definition \ref{def:12}, and it can be written as follows:
\begin{Def}[\cite{BCV20}]
$d_{{\bf y}\text{-}{\it ff}}(F):=\min \{d \mid {\rm KSyz}(F)\cap \mathbb{F}[{\bf y}]_{d-2}^m \not ={\rm Syz}(F)\cap \mathbb{F}[{\bf y}]_{d-2}^m\}$
\end{Def}
Let ${\bm M}_{{\bf y},d}(F)$ be the degree-$d$ part of the Macaulay matrix of Procedure \ref{proc:20250913a} with ${\bf d}=(1,d-1)$, and $J_{\bf x}(F)=(\partial f_i/\partial x_j)_{1\leq i\leq m,1\leq j\leq n_x}$.
Since $\mathbb{F}[{\bf y}]^m\cap {\rm KSyz}(F)=\{{\bm 0}\}$, it is important, by the following lemma, to study the injectivity of the map $J_{\bf x}(F)_d: \mathbb{F}[{\bf y}]_{d-2}^m \rightarrow \mathbb{F}[{\bf y}]_{d-1}^{n_x},{\bm g}\mapsto {\bm g}\cdot J_{\bf x}(F)$ in order to investigate $d_{{\bf y}\text{-}{\it ff}}(F)$.
\begin{Lem}[\cite{BCV20}]\label{lem:20250913d}
${\rm Syz}(F)\cap \mathbb{F}[{\bf y}]_{d-2}^m={\rm Ker}(J_{\bf x}(F)_d)$.
%For $(g_1,\dots, g_m)\in \mathbb{F}[{\bf y}]^m$, the following holds:
%$(g_1,\dots, g_m)\in {\rm Syz}(F)\Leftrightarrow (g_1,\dots, g_m)\in {\rm Ker}({\rm Jac}_{\bf x}(\partial f_i/\partial x_j)_{1\leq i\leq m, 1\leq j\leq n_x})$.
\end{Lem}
At this point, they introduced the notions of regularity, degree of regularity, and regular sequence with respect to the degrees $(1,d-1)$ as follows.
\begin{Def}[\cite{BCV20}] 
Let $F\subset \mathbb{F}[{\bf x},{\bf y}]$ be a set of bi-homogeneous polynomials.
\begin{enumerate}
\item $F$ is said to be ${\bf y}$-$d$-regular if ${\rm rank}({\bm M}_{{\bf y},d}(F))=m\binom{n_y+d-3}{d-2}$.
\item $d_{{\bf y},{\it reg}}(F):=\min \{d \mid {\rm rank}({\bm M}_{{\bf y},d}(F))=n_x\binom{n_y+d-2}{d-1}\}$.
\item $F$ is said to be ${\bf y}$-semiregular if $F$ is ${\bf y}$-$d$-regular for all $d<d_{{\bf y},{\it reg}}(F)$.
\end{enumerate}
\end{Def}
As a result, they determined the degree of regularity and the first fall degree with respect to the bi-degree $(1,d-1)$, and clarified their relation:
\begin{Prop}[\cite{BCV20}]
For a ${\bf y}$-semiregular sequence $F$, the following hold:
\begin{enumerate}
\item $d_{{\bf y},{\it reg}}(F)=\left \lceil \frac{n_x(n_y-1)}{m-n_x}\right \rceil +1$
\item $d_{{\bf y}\text{-}{\it ff}}(F)=\min \{d>0 \mid m\binom{n_y+d-3}{d-2}>n_x\binom{n_y+d-2}{d-1}\}$
\end{enumerate}
In particular, if $(m-n_x)\mid n_x(n_y-1)$, then $d_{{\bf y}\text{-}{\it ff}}(F)=d_{{\bf y},{\it reg}}(F)+1$, otherwise $d_{{\bf y}\text{-}{\it ff}}(F)=d_{{\bf y},{\it reg}}(F)$.
\end{Prop}
\begin{Rem}[\cite{BCV20}]
The map $J_{\bf x}(F)_d$ corresponds, as a linear map over $\mathbb{F}$, to an $m\binom{n_y+d-3}{d-2}\times n_x\binom{n_y+d-2}{d-1}$ matrix.
Thus if $m\binom{n_y+d-3}{d-2}>n_x\binom{n_y+d-2}{d-1}$, then ${\rm Syz}(F)\cap \mathbb{F}[{\bf y}]^m_{d-2}\not =\{{\bm 0}\}$.
\end{Rem}
Essentially, our paper does not achieve such precise results as theirs, but we explain a partial relation below.
In particular, we rewrite part of their results in terms of the formal power series we introduced.
For this purpose, we consider the value $D_{\mathcal{X},\prec }^{\leq a}$ in Definition \ref{def:20250914a} on bi-degrees $(1,d-1)$ as follows:
\begin{Rem}\label{rem:20250912a}
{\rm i)} We consider the lexicographic order $\prec _{lex}$ as the well-ordering on $\{1\}\times \mathbb{Z}_{\geq 0}$, i.e. $(1,d_1)\prec_{lex} (1,d_2)\Leftrightarrow d_1<d_2$.  
When setting $\mathcal{X}=\{1\}\times \mathbb{Z}_{\geq 0}$ and $\prec \;=\; \prec _{lex}$, the value $D_{\mathcal{X},\prec}^{\leq a}$ in Definition \ref{def:20250914a} is given as follows:
\[\hspace{-10pt}D_{\mathcal{X},\prec}^{\leq a}=\min \{d \mid {\rm Coeff}(S_{(m,n_x,n_y)}(t_x,t_y),t_xt_y^{d-1})\leq a\},\]
%\[\hspace{-20pt}D_{\{1\}\times \mathbb{Z}_{\geq 0},\prec _{lex}}^{\leq a}\hspace{-7pt}=\min \{d \mid {\rm Coeff}(S_{(m,n_x,n_y)}(t_x,t_y),t_xt_y^{d-1})\leq a\},\]
where $S_{(m,n_x,n_y)}(t_x,t_y)=(1-t_xt_y)^m(1-t_x)^{-n_x}(1-t_y)^{-n_y}$.
%\end{remark}
%\begin{definition}\label{def:20250913c}
%Let $S_{(m,n_x,n_y)}(t_x,t_y):=(1-t_xt_y)^m(1-t_x)^{-n_x}(1-t_y)^{-n_y}$.
%Then define $D_{(1,*)}^{\leq a}:=\min \{d \mid {\rm Coeff}(S_{(m,n_x,n_y)}(t_x,t_y),t_xt_y^{d-1})\leq a\}$.
%\end{definition}
%\begin{remark}
\noindent {\rm ii)} For the formal power series $S_{(m,n_x,n_y)}(t_x,t_y)$, we have\\
%\[
%\hspace{-20pt}{\rm Coeff}\left (S_{(m,n_x,n_y)}(t_x,t_y),t_xt_y^{d-1}\right )=n_x\binom{n_y+d-2}{d-1}-m\binom{n_y+d-3}{d-2}.
%\]
\[
\begin{array}{l}
{\rm Coeff}\left (S_{(m,n_x,n_y)}(t_x,t_y),t_xt_y^{d-1}\right )\\
={\rm Coeff}\left ((1-t_y)^{-n_y}(1+n_xt_x)(1-mt_xt_y),t_xt_y^{d-1}\right )\\
={\rm Coeff}\left ((1-t_y)^{-n_y}(n_xt_x-mt_xt_y),t_xt_y^{d-1}\right )\\
=n_x\binom{n_y+d-2}{d-1}-m\binom{n_y+d-3}{d-2}.
\end{array}
\]
\end{Rem}
By Remark \ref{rem:20250912a}, ii), their notion of regularity can be restated such as Proposition \ref{prop:Die15} and Lemma \ref{main}:
\begin{Cor}
Let $F=\{f_1,\dots, f_m\}\subset \mathbb{F}[{\bf x},{\bf y}]$ with $\deg _{\mathbb{Z}_{\geq 0}^2}f_i=(1,1)$, $n_x=\sharp {\bf x}$, and $n_y=\sharp {\bf y}$, and $S_{(m,n_x,n_y)}(t_x,t_y)$ be the notation in Remark \ref{rem:20250912a}.
The following are equivalent:
\begin{enumerate}
\item $F$ is ${\bf y}$-$d$-regular
\item ${\rm HS}_{\mathbb{F}[{\bf x},{\bf y}]/\langle F\rangle }(t_x,t_y)\equiv _{(1,d-1)}S_{(m,n_x,n_y)}(t_x,t_y)$
\item $H_1(K(f_1,\dots, f_m)_\bullet )_{(1,d-1)}=0$
\end{enumerate}
\end{Cor}
\begin{proof}
Since $\dim _\mathbb{F}\langle F\rangle _{(1,d-1)}={\rm rank}({\bm M}_{{\bf y},d}(F))$ and $\dim _\mathbb{F}\mathbb{F}[{\bf x},{\bf y}]_{(1,d-1)}=n_x\binom{n_y+d-2}{d-1}$, Remark \ref{rem:20250912a} implies $``1\Leftrightarrow 2"$. Moreover, since ${\rm rank}({\bm M}_{{\bf y},d}(F))=\dim_\mathbb{F} {\rm Im}(J_{\bf x}(F))$, we have ${\rm rank}({\bm M}_{{\bf y},d}(F))=m\binom{n_y+d-3}{d-2}=\dim _\mathbb{F}\mathbb{F}[{\bf y}]_{d-2}^m\Leftrightarrow {\rm Ker}(J_{\bf x}(F))=\{{\bm 0}\}$.
Since Lemma \ref{lem:20250913d} and $H_1(K(f_1,\dots, f_m)_\bullet )_{(1,d-1)}={\rm Syz}(F)\cap \mathbb{F}[{\bf y}]^m_{d-2}$, it follows that $``1\Leftrightarrow 3"$.
\end{proof}
Therefore, we have the following facts:
\begin{Cor}\label{cor:20250913b}
Let $F=\{f_1,\dots, f_m\}\subset \mathbb{F}[{\bf x},{\bf y}]$ with $\deg _{\mathbb{Z}_{\geq 0}^2}f_i=(1,1)$, $n_x=\sharp {\bf x}$, and $n_y=\sharp {\bf y}$, and $D_{\{1\}\times \mathbb{Z}_{\geq 0},\prec _{lex}}^{\leq -1}$ be the notation in Remark \ref{rem:20250912a}.
Then, $d_{{\bf y}\text{-}{\it ff}}(F)\leq D_{\{1\}\times \mathbb{Z}_{\geq 0},\prec _{lex}}^{\leq -1}$.
\end{Cor}
Note that $d_{{\bf y},{\it reg}}(F)=d_{\{1\}\times \mathbb{Z}_{\geq 0},\prec _{lex}}^{\leq 0}(F)$ (see Definition \ref{def:20250914e}) and $d_{{\bf y},{\it reg}}(F)\leq D_{\{1\}\times \mathbb{Z}_{\geq 0},\prec _{lex}}^{\leq 0}$ under the assumption of Proposition \ref{prop:0322c}. However, in the bi-degree $(1,d-1)$ setting, the following holds:
\begin{Cor}\label{cor:20250914c}
Let $F$ be a ${\bf y}$-semiregular sequence, $D_{\mathbb{Z}_{\geq 0}^2}$ be the notation in Definition \ref{def:9} with $s=2$, and $D_{\{1\}\times \mathbb{Z}_{\geq 0},\prec _{lex}}^{\leq 0}$ and $D_{\{1\}\times \mathbb{Z}_{\geq 0},\prec _{lex}}^{\leq -1}$ be the notation in Remark \ref{rem:20250912a}.
Then, the following hold:
\begin{enumerate}
\item $d_{{\bf y},{\it reg}}(F)=D_{\{1\}\times \mathbb{Z}_{\geq 0},\prec _{lex}}^{\leq 0}$
\item $d_{{\bf y}\text{-}{\it ff}}(F)=D_{\{1\}\times \mathbb{Z}_{\geq 0},\prec _{lex}}^{\leq -1}$
\item $D_{\mathbb{Z}^2_{\geq 0}}\leq d_{{\bf y}\text{-}{\it ff}}(F)$
\end{enumerate}
\end{Cor}
From the above results, $D_{\{1\}\times \mathbb{Z}_{\geq 0},\prec _{lex}}^{\leq 0}$ is an appropriate object for Procedure \ref{proc:20250913a} with ${\bf d}=(1,d-1)$ when the input system is ${\bf y}$-semiregular. 
On the other hand, the value $D_{\mathbb{Z}^2_{\geq 0}}$ we introduced takes into account the bi-degree in $D_{\{1\}\times \mathbb{Z}_{\geq 0},\prec _{lex}}^{\leq 0}$, and when $D_{\mathbb{Z}^2_{\geq 0}}<D_{\{1\}\times \mathbb{Z}_{\geq 0},\prec _{lex}}^{\leq 0}$, this suggests the possibility of more efficient solving in a different bi-degree from the form $(1,d-1)$. However, the actual efficiency depends on the algorithm and the input parameters, so a detailed analysis is required.
\begin{Rem}
If $(m-n_x)\mid n_x(n_y-1)$, then there exists a positive integer $d$ such that $m\binom{n_y+d-3}{d-2}=n_x\binom{n_y+d-2}{d-1}$.
By their result in \cite{BCV20} and Corollary \ref{cor:20250913b}, in this case $F$ is also ${\bf y}$-$d_{{\bf y},{\it reg}}$-regular.
That is, if $F$ is ${\bf y}$-semiregular, then for all $d< D_{\{1\}\times \mathbb{Z}_{\geq 0},\prec _{lex}}^{\leq -1}$, the system $F$ is ${\bf y}$-$d$-regular.
\end{Rem}

\section{Application to multivariate cryptography}\label{ssec:app}
In Subsection \ref{ssec:extmulti}, we give an upper bound for the first fall degree $d_{{\it ff}}$ of a polynomial system $f_1,\dots ,f_m$ whose top homogeneous component is $\mathbb{Z}_{\geq 0}^s$-homogeneous under the assumption that $\deg f_i=d_0\geq 2$ and 
\begin{equation}\label{eq:assumption}
q>\min \{d_{{\it ff}}(f_1,\dots ,f_m),d_{{\it KSyz}}(f_1,\dots ,f_m)\}.
\end{equation}
In this section, we show that this assumption is satisfied in actual attacks against multivariate public-key signature schemes Rainbow \cite{NIST2R} and G$e$MSS \cite{NIST2RGeMSS} proposed in NIST PQC standardization project \cite{NIST2016}. 

In Subsection \ref{ssec:kra}, we recall the multivariate public-key signature scheme. 
In Subsection \ref{ssec:rbsattack} (resp. Subsection \ref{ssec:minrankattackks}), we explain the RBS attack (resp.  the MinRank attack) as a key recovery attack against Rainbow (resp. G$e$MSS), and show that the quadratic system generated by the attack satisfies the condition \eqref{eq:assumption}. 

\subsection{Key recovery attack}\label{ssec:kra}
For simplicity, we treat only homogeneous polynomials as polynomials and assume that the characteristic of the field $\mathbb{F}$ is odd.
Then, a quadratic homogeneous polynomial in $\mathbb{F}[x_1,\dots ,x_n]$ corresponds to a symmetric $n\times n$ matrix over $\mathbb{F}$.
A polynomial system $(f_1,\dots ,f_m)$ of $\mathbb{F}[x_1,\dots ,x_n]^m$ gives a map $\mathbb{F}^n\rightarrow \mathbb{F}^m$ by $ {\bf a}\mapsto (f_1({\bf a}),\dots ,f_m({\bf a}))$ which is called a {\it polynomial map}.  
For a set $\mathcal{C}$ of easily invertible quadratic maps, a multivariate public-key signature scheme consists of the following three algorithms:
\begin{itemize}
\item[] \underline{Key generation}: We randomly construct two invertible linear maps $U:\mathbb{F}^n\rightarrow \mathbb{F}^n$ and $ T:\mathbb{F}^m\rightarrow \mathbb{F}^m$, and $F:\mathbb{F}^n\rightarrow \mathbb{F}^m \in \mathcal{C}$ which is called a {\it central map}, and then compute the composition $P:=T\circ F\circ U.$ The {\it public key} is given as $P$. The tuple $(T,F,U)$ is a {\it secret key}.
\item[] \underline{Signature generation}: For a message ${\bf b}\in \mathbb{F}^m$, we compute ${\bf b}'=T^{-1}({\bf b})$.
Next, we can compute an element ${\bf a}'$ of $F^{-1}(\{{\bf b}'\})$ since $F$ is easily invertible. Consequently, we obtain a signature ${\bf a}=U^{-1}({\bf a}')\in \mathbb{F}^n$.
\item[] \underline{Verification}: We verify whether $P({\bf a})={\bf b}$ holds.
\end{itemize}
For a given public key $P$, the {\it key recovery attack} generates two invertible linear maps $U',T'$ and $F'\in \mathcal{C}$ such that $P=T'\circ F'\circ U'$ and forges a signature for any message.
For matrices $M_{p_i}$, $M_{f_i'}$, $M_{U'}$ and $M_{T'}$ corresponding to $p_i,f_i',U'$ and $T'$ where $P=(p_1,\dots ,p_m)$ and $F=(f_1',\dots ,f_m')$, respectively, we have
\[\hspace{-20pt}(M_{p_1},\dots ,M_{p_m})=(M_{U'}M_{f_1'}{}^tM_{U'},\dots ,M_{U'}M_{f_m'}{}^tM_{U'})M_{T'}.\]
The RBS attack \cite{DYCCC} takes time to generate a part of $M_{U'}$ and $M_{T'}$ of UOV \cite{KPG99} or its mulit-layerization, i.e.\! Rainbow \cite{DS05}.
For a public key $P=(p_1,\dots ,p_m)$ of HFE and a central $n\times n$ matrix $M_c=(c_{ij})_{i,j}$ over $\mathbb{F}_{q^n}$, we have
\[(M_{p_1},\dots ,M_{p_m})=(M_{U'}M_{\varphi }M_{1}{}^tM_{\varphi }{}^tM_{U'},\dots ,M_{U'}M_{\varphi }M_{m}{}^tM_{\varphi }{}^tM_{U'})M_{\varphi }^{-1}M_{T'},\]
where $M_{\varphi }=(\theta ^{(i-1)q^{j-1}})_{1\leq i,j\leq n}$, $\mathbb{F}_{q^n}=\mathbb{F}_q[\theta ]$ and $M_k=(c_{i-k,j-k}^{q^k})$.
The MinRank attack with the KS method takes time to generate a column of $M_{\varphi }^{-1}M_{T'}$ of HFE or its modifications, e.g.\! G$e$MSS.

\subsection{The RBS attack}\label{ssec:rbsattack}

The RBS attack is an attack generating a secret key of Rainbow. 
Let $v,o_1$ and $o_2$ be Rainbow parameters.
For $n\times n$ matrices $M_{p_1},\dots ,M_{p_m}$ corresponding to a public quadratic system $(p_1,\dots ,p_m)$ where $n=v+o_1+o_2$ and $m=o_1+o_2$, the RBS dominant system is a quadratic system in $\mathbb{F}[x_1,\dots ,x_{v+o_1},y_1,\dots ,y_{o_2}]^{m+n-1}$ consisting of
\[\hspace{-15pt}(x_1,\dots ,x_{v+o_1},0,\dots ,0,1)M_{p_i}{}^t(x_1,\dots ,x_{v+o_1},0,\dots ,0,1), 1\leq i\leq m,\]
and the first $n-1$ components of
\[\hspace{-10pt}(x_1,\dots ,x_{v+o_1},0,\dots ,0,1)M_{p_1}+\sum _{j=1}^{o_2}y_j(x_1,\dots ,x_{v+o_1},0,\dots ,0,1)M_{p_{o_1+j}}.\] 
Since $(x_1,\dots ,x_{v+o_1},0,\dots ,0,1)$ and ${}^t(1,0,\dots ,0,y_1,\dots ,$ $y_{o_2})$ correspond to a row and a column of secret linear transformations, the RBS attack can generate a part of the secret key by solving the system.
Since the polynomial ring $S=\mathbb{F}_q[x_1,\dots ,x_{v+o_1},y_1,\dots ,y_{o_2}]$ is $\mathbb{Z}_{\geq 0}^2$-graded by $\deg _{\mathbb{Z}_{\geq 0}^2}x_i=(1,0)$ and $\deg _{\mathbb{Z}_{\geq 0}^2}y_i=(0,1)$, the top homogeneous component of the RBS dominant system is contained in $S_{(1,1)}^{n-1}\oplus S_{(2,0)}^{m}$ and is $\mathbb{Z}_{\geq 0}^2$-homogeneous.
Then, for the $\mathbb{Z}_{\geq 0}^2$-graded polynomial ring $S$, the power series \eqref{eq:mltiHS} in Definition \ref{def:9} is 
\begin{equation}\label{eq:RBS}
\frac{(1-t_1t_2)^{v+o_1+o_2-1}(1-t_1^2)^{o_1+o_2}}{(1-t_1)^{v+o_1}(1-t_2)^{o_2}}.
\end{equation}
%The value $D_{{\it bgd}}$ in \cite{Nak20} is defined as the minimum total degree of the term whose coefficient is negative in this power series, and is not nothing our  for $\mathbb{Z}_{\geq 0}^2$-graded $S$ above.
The paper \cite{Nak20} experimentally shows that the solving degree of the RBS dominant system is tightly approximated by $D_{\mathbb{Z}_{\geq 0}^2}$ in Definition \ref{def:9} which is written as $D_{{\it bgd}}$ in \cite{Nak20}.
According to \cite{Nak20}, for the Rainbow parameters Ia and IIIc/Vc \cite{NIST2R} proposed in NIST PQC 2nd round, the best complexities of the Rainbow-Based-Separation attack are given by $D_{\mathbb{Z}_{\geq 0}^2}=15$ and $23/30$, respectively.
Since $q=16$ and $256$ for the parameter Ia and IIIc/Vc, it follows that $q>D_{\mathbb{Z}_{\geq 0}^2}$ holds.
In particular, since $q>d_{{\it KSyz}}$ by Theorem \ref{thm:8}, the assumption \eqref{eq:assumption} holds and the second half of Theorem \ref{thm:8} holds.
Namely, the value $D_{\mathbb{Z}_{\geq 0}^2}$ in the paper \cite{Nak20} gives an upper bound for the first fall degree $d_{{\it ff}}$.
Furthermore, Smith-Tone and Perlner \cite{SP20} propose an XL algorithm as the bi-graded version of that explained in Subsection \ref{ssec:solvermgd} and provide the complexity estimation for the attack by using a certain theoretical value.
Then we can use Theorem \ref{thm:8gen} as a theoretical background of the estimation.
Here note that their theoretical value in \cite{SP20} is defined by a non-positive coefficient appeared in \eqref{eq:RBS}, namely, it is different from our value ${\bf D}_{\mathbb{Z}_{\geq 0}^2}$, and requires another theoretical background based on such as a conjecture in Diem \cite{Die04}.
%Furthermore, Smith-Tone and Perlner \cite{SP20} propose a Gr\"{o}bner basis algorithm that arranges polynomials arisen from the RBS dominant system with respect to a well-ordering on $\mathbb{Z}_{\geq 0}^2$ and further improves the complexity of the attack.
%Then we can use Theorem \ref{thm:8gen} as a theoretical background of this algorithm.
%Here note that their theoretical value in \cite{SP20} is defined by a non-positive coefficient appeared in \eqref{eq:RBS}, i.e. it is different from our value $D_{\mathbb{Z}_{\geq 0}^2}$, and has another theoretical background based on such as a big conjecture in Diem \cite{Die04}.

\subsection{The MinRank attack using the KS method}\label{ssec:minrankattackks}

Multivariate signature scheme G$e$MSS \cite{NIST2RGeMSS} is a minus and vinegar modification of HFE \cite{Pat96}.
The MinRank attack with the KS method \cite{KS98} is an attack generating a secret key of a multivariate cryptosystem such as G$e$MSS and Rainbow.
Although a public quadratic system of G$e$MSS is defined over the field $\mathbb{F}_2$ of order two, the complexity of the attack is dominated by that of a Gr\"{o}bner basis algorithm for solving a certain system over a very large field, say the KS system. 
Let $n,D,a$ and $v$ be G$e$MSS parameters.
For $(n+v)\times (n+v)$ matrices $M_{p_1},\dots ,M_{p_{n-a}}$ over $\mathbb{F}_2$ corresponding to the public quadratic system $(p_1,\dots ,p_{n-a})$, the MinRank attack finds $x_1,\dots ,x_{n-a}$ in $\mathbb{F}_{2^n}$ such that 
\[{\rm rank}\left (\sum _{i=1}^{n-a} x_iM_{p_i}\right )\leq r,\]
where $r=\lceil \log _2(D-1)\rceil +a+v$.
Then, since a found vector $(x_1,\dots ,x_{n-a})$ corresponds to a column vector of a certain linear transform over $\mathbb{F}_{2^n}$, the MinRank attack can generate a part of a secret key.
For finding $x_1,\dots ,x_{n-a}$, the Kinis-Shamir modeling solves the KS system in $\mathbb{F}_{2^n}$ which is a quadratic system in $\mathbb{F}_2[{\bf x},{\bf k}_1,\dots ,{\bf k}_c]^{c(n-a)}$ and is the components of 
\[(0,\dots ,0,\overset{j}{1},0,\dots ,0,k_{j1},\dots ,k_{jr})\left (\sum _{i=1}^{n-a}x_iM_{p_i}\right ),\]\[ 1\leq j\leq c,\]
where ${\bf x}=\{x_1,\dots ,x_{n-a}\}$, ${\bf k}_j=\{k_{j1},\dots ,k_{jr}\}$ and $c\leq n-a-r$. 
Since the polynomial ring $S=\mathbb{F}_2[{\bf x},{\bf k}_1,\dots ,{\bf k}_c]$ is $\mathbb{Z}_{\geq 0}^{c+1}$-graded by 
\[\deg _{\mathbb{Z}_{\geq 0}^{c+1}}x_i=(1,0,\dots ,0),\text{ and }\deg _{\mathbb{Z}_{\geq 0}^{c+1}}k_{jl}=(0,\dots ,0,\overset{j+1}{1},0,\dots ,0),\] 
the top homogeneous component of the KS system is contained in $\bigoplus _{j=2}^{c+1}S_{{\bf e}_1+{\bf e}_{j}}^{n-a}$ and is $\mathbb{Z}_{\geq 0}^{c+1}$-homogeneous where ${\bf e}_j=(0,\dots ,0,\overset{j}{1},0,\dots ,0)$.
Then, for the $\mathbb{Z}_{\geq 0}^{c+1}$-graded polynomial ring $S$, the power series \eqref{eq:mltiHS} in Definition \ref{def:9} is 
%Then, the paper \cite{Nak20b} shows that the solving degree of the KS system is approximated by the minimum total degree $D_{{\it mgd}}$ of the term whose coefficient is negative in
%Then, $D_{\mathbb{Z}_{\geq 0}^s}$ in Definition \ref{def:9} for the KS system is deduced by 
\begin{equation}\label{eq:KS}\frac{(1-t_0t_1)^{n-a}\cdots (1-t_0t_c)^{n-a}}{(1-t_0)^m(1-t_1)^r\cdots (1-t_c)^r}.\end{equation}
The paper \cite{Nak20b} experimentally shows that the solving degree of the KS system from the Minrank problem \cite{Fau08} is approximated by $D_{\mathbb{Z}_{\geq 0}^{c+1}}$ in Definition \ref{def:9} which is written as $D_{{\it mdg}}$ in \cite{Nak20b}.
%Namely, the value $D_{{\it mdg}}$ coincides with $D_{\mathbb{Z}_{\geq 0}^{c+1}}$ for the $\mathbb{Z}_{\geq 0}^{c+1}$-graded $S$ above.
Then they also show that the value $D_{\mathbb{Z}_{\geq 0}^{c+1}}$ which is smaller than the order $q=256$, i.e. the assumption \eqref{eq:assumption} holds, improves the complexity of the MinRank attack with the KS method against Rainbow. 
In particular, the value $D_{\mathbb{Z}_{\geq 0}^{c+1}}$ in \cite{Nak20b} gives an upper bound for the first fall degree $d_{{\it ff}}$.
%%%%%%%%%%%
%\textcolor{blue}{
%Note that \cite{Nak20b} shows that the value $D_{KS}$ given in a study of bilinear polynomials \cite{Ver19} estimates the first fall degree better than $D_{\mathbb{Z}_{\geq 0}^{c+1}}$ in some situation, while \cite{Nak20b} also gives an example in which the value $D_{\mathbb{Z}_{\geq 0}^{c+1}}$ estimates the solving degree better.
%}
%%%%%%%%%%
Meanwhile, the G$e$MSS parameter sets for a security of $2^{128}$, $2^{192}$ and $2^{256}$ proposed in NIST PQC 2nd round take $n\approx 174, 265$ and $354$ \cite{NIST2RGeMSS}, respectively.
Thus, for these proposed parameter sets, the order $q$ in the definition of the first fall degree $d_{{\it ff}}$ is around $ 2^{174}, 2^{265}$ and $2^{354}$, respectively, and the complexities at $c=1$ of the KS method are given by $D_{\mathbb{Z}_{\geq 0}^{c+1}}=26, 44$ and $65$, respectively.
It follows that $q\gg D_{\mathbb{Z}_{\geq 0}^{c+1}}$ holds.
In particular, $q>d_{{\it KSyz}}$ by Theorem \ref{thm:8}, and the second half of Theorem \ref{thm:8} holds.
Therefore, this $D_{\mathbb{Z}_{\geq 0}^{c+1}}$ gives an upper bound for the first fall degree $d_{{\it ff}}$.
%In this case, since each order $q$ satisfies the desired security, we see that it suffices to compute $d_{{\it KSyz}}$ by Remark \ref{rem:b}.

\section*{Acknowledgement}
This work was supported by JST CREST Grant Number JPMJCR2113, and JSPS KAKENHI Grant Number JP20K19802 and JP23K16885. The authors are grateful to Yacheng Wang and Yasuhiko Ikematsu for their comments on the article.

%% The Appendices part is started with the command \appendix;
%% appendix sections are then done as normal sections

%% \section{}
%% \label{}

%% If you have bibdatabase file and want bibtex to generate the
%% bibitems, please use
%%
%%  \bibliographystyle{elsarticle-num} 
%%  \bibliography{<your bibdatabase>}

%% else use the following coding to input the bibitems directly in the
%% TeX file.

%%%%%%%%%%%%%%%%%%%%%%%%%%%%%%%%%%%%%%%%%%%%%
\end{document}